\def\BibTeX{{\rm B\kern-.05em{\sc i\kern-.025em b}\kern-.08em
    T\kern-.1667em\lower.7ex\hbox{E}\kern-.125emX}}
\newcommand{\norm}[1]{\lVert #1 \rVert}
\newcommand{\Norm}[1]{\left\lVert #1 \right\rVert}
\newcommand{\abs}[1]{\lvert #1 \rvert}
\DeclareMathOperator*{\minimize}{\ \ \ minimize\ \ \ }
\DeclareMathOperator*{\subjectto}{\ \ \ subject\ to\ \ \ }
\DeclareMathOperator{\dist}{dist}
\DeclareMathOperator{\out}{out}
\DeclareMathOperator{\stc}{stc}
\DeclareMathOperator{\dyn}{dyn}
\DeclareMathOperator{\orig}{orig}
\DeclareMathOperator{\reference}{ref}
\DeclareSymbolFont{bbold}{U}{bbold}{m}{n}
\DeclareSymbolFontAlphabet{\mathbbold}{bbold}
\newcommand{\add}[1]{#1}
\newcommand{\del}[1]{}
\crefname{equation}{}{}
\theoremstyle{definition}
\newtheorem{definition}{Definition}[section]
\newtheorem{lemma}[definition]{Lemma}
\newtheorem{proposition}[definition]{Proposition}
\newtheorem{theorem}[definition]{Theorem}
\newtheorem{corollary}[definition]{Corollary}
\newtheorem{remark}[definition]{Remark}
\begin{document}

\begin{frontmatter}


\title{Reduced-Order Model Predictive Control of a Fish Schooling Model}

\author{Masaki Ogura}

\author{Naoki Wakamiya}

\affiliation{organization={Graduate School of Information Science and Technology, Osaka University},
            city={Suita},
            postcode={565-0871}, 
            state={Osaka},
            country={Japan}}

\begin{abstract}
\del{In this paper, we}\add{We} study the problem of model predictive control (MPC) for the fish schooling model \add{proposed} by Gautrais et al.~(\emph{Annales Zoologici Fennici}, 2008). \del{This typical mathematical model of fish schooling contains an attraction/alignment/repulsion law that {is not well-considered in the current results in the systems and control theory}. }The high nonlinearity of the \del{dynamics}\add{model} \del{resulting from}\add{attributed to} \del{the}\add{its attraction/alignment/repulsion} law suggests \del{using the }\add{the need to use} MPC for controlling \del{the motion of }the fish schooling\add{'s motion}. However, \del{in the case of}\add{for} large schools, \del{one can expect that }the \del{law's }hybrid nature \add{of the law} \del{makes}\add{can make} it numerically \del{difficult}\add{demanding} to perform finite-horizon optimizations in MPC. Therefore, this paper proposes \del{a reduction of}\add{reducing} the fish schooling model for numerically efficient MPC\del{. The}\add{; the} reduction is based on \del{taking}\add{using} the weighted average of the directions of individual fish in the school. \del{On this reduction, we}\add{We} analytically show \del{that}\add{how} using the normalized eigenvector centrality of the alignment-interaction network can yield a better reduction \add{by comparing reduction errors}. We confirm this finding on the weight \del{as well as}\add{and} \del{the }numerical efficiency of the MPC with the reduced-order model by numerical simulations. \add{The proposed reduction allows us to control a school with up to 500 individuals. Further, we confirm that reduction with the normalized eigenvector centrality allows us to improve the control accuracy by factor of five when compared to that using constant weights.} \end{abstract}



\begin{keyword}
Swarm control \sep MPC \sep model order reduction
\end{keyword}

\end{frontmatter}


\section{Introduction}

Swarming behaviors~\cite{Perc2013,Leonard2014} \del{can be found in}\add{are exhibited by} various biological populations\del{,} including ant colonies~\cite{Burton1985}, \del{bird}\add{birds} \del{flocking}\add{flocks}~\cite{Emlen1952}, \del{and }fish \del{schooling}\add{schools}~\cite{Parrish2002}\add{, traffic flows~\cite{Feliciani2016}, and migration~\cite{Berdahl2016}}. A fundamental reason for the emergence of \del{the }swarming behaviors is \del{their advantages }that \del{are not necessarily achievable}\add{their advantages cannot be achieved} by a single individual; \add{the} examples of \del{the}\add{these} advantages include fault tolerance, parallelism, robustness, and flexibility. These advantages are specifically useful for performing complex tasks such as gathering and processing information, protection from predators, and distributed resource allocation. \del{For this reason}\add{Therefore}, engineers in various disciplines have been investigating \del{the }swarming behaviors to develop systems \del{having}\add{with} the \add{aforementioned} advantages\del{ above}. For example, \del{the }particle swarm optimization~\cite{Kennedy1995} is \del{one}\add{a} successful application in mathematical optimization inspired by \del{the }swarming behaviors. Another prominent example is \del{the }swarm robotics~\cite{Brambilla2013,Mayya2019}, \add{which is} an emerging approach \del{to}\add{for} collective robotics supported by inspirations from swarming behaviors. In the \del{context of}\add{literature on} systems and control theory, \del{the papers \mbox{\cite{Gray2018,Stella2019}} present} distributed and adaptive control laws for multiagent systems inspired by honeybees swarms \add{have been reported~\cite{Gray2018,Stella2019}}. \add{Besides the aforementioned engineering applications, swarming models offer the emerging field of social physics~\add{\cite{Jusup2022}} a major tool for modeling and analyzing social phenomena~\cite{Perc2017}.}

Schooling~\cite{Parrish2002} is \del{one of the}\add{a} typical swarming \del{behaviors taken by fishes}\add{behavior exhibited by fish}\add{.} \del{and}\add{It} provides \del{the }fish populations with several benefits, \del{including}\add{such as} protection from predators, efficient foraging, and higher rates \del{in}\add{of} finding a mate. Motivated by these direct advantages and their remarkable benefits in evolutionary processes, researchers \del{are}\add{have focused on} working toward modeling, analysis, and control of fish schools from \add{a} wide range of perspectives. In the context of fish biology, several mathematical models for fish schooling have been proposed~\cite{Huth1992,Couzin2002,Kunz2003,Viscido2004,Gautrais2008} to better model and understand \add{the} fish schooling phenomena, \del{laying the}\add{which lays the} foundations for recent data-driven analysis~\cite{Calovi2015} of fish schools. \del{In the context of river engineering, there}\add{There} exists an ongoing research trend \del{for guiding}\add{in the field of river engineering to guide} fish schools to prevent them from being trapped into harmful objects such as water diversions~\add{\cite{Poletto2015}}\del{(see, e.g.,~\mbox{\cite{Poletto2015}})}. Another fish guidance problem of practical importance can be found in fishery engineering, where researchers \del{are investigating}\add{investigate} methods \del{for improving}\add{to improve} the selectivity of net fisheries by \del{the modification of}\add{modifying} schooling behaviors with light guidance devices~\cite{Virgili2018,Bielli2020}. Finally, in the context of swarm intelligence, the authors in~\cite{Filho2009} proposed a search algorithm called \del{Fish School Search}\add{fish school search}, which is known for its high-dimensional search ability, automatic selection between exploration and exploitation, and self-adaptability.

Although several control theoretical frameworks exist for \del{the analysis and control of}\add{analyzing and controlling} swarming behaviors \del{\mbox{\cite{Olfati-Saber2006a,Su2007,Cucker2007,Li2008a,Bakshi2020,Fedele2020,Paley2007,Yi2021,Xue2020}}}\add{\cite{Olfati-Saber2006a,Li2008a,Bakshi2020,Paley2007}}, many \del{of them }are not necessarily tailored to fish schooling because the \add{aforementioned} works \del{mentioned above }do not explicitly consider the short-distance repulsion, middle-range alignment, and long-distance attraction rule \del{commonly }used \add{commonly} in the mathematical modeling of \add{the motion of} fish \add{schools}~\cite{Huth1992,Couzin2002,Kunz2003,Viscido2004,Gautrais2008}. One exception is the work by Li et al.~\cite{Li2008c}, where the authors proposed a method \del{for designing}\add{to design} decentralized controllers utilizing an attraction/alignment/repulsion law for a swarm of mobile agents to achieve a collective group behavior. However, the method assumes a decentralized controller for each individual in a swarm\del{ and, therefore,}\add{, and therefore, it} is not necessarily easy to apply for the control of fish schools. \del{We also remark that control approach via leader selection \mbox{\cite{Sharkey2017,Fitch2016,Liu2012b}} is not very realistic in our scenario of guiding fish school.}

\del{To fill}\add{We study the model predictive control (MPC)~\cite{Rossiter2017} of a fish schooling model proposed by Gautrais et al.~\cite{Gautrais2008} to fill} in the gap between the current results in the systems and control theory and the standard modeling practice of fishes in mathematical biology\del{, in this paper, we study the model predictive control (MPC) of a fish schooling model proposed by Gautrais et al.~\mbox{\cite{Gautrais2008}}}. \add{We choose this model because it is built on a standard model~\cite{Couzin2002} for swarming behaviors,  and furthermore, its treatment of the behavioral noises of individuals within the swarm is improved compared with that of the original model.} \del{Because the}\add{The} model has a discontinuity arising from \del{the }short-distance repulsion, middle-range alignment, and long-distance attraction rule\del{,}\add{; therefore} it is not necessarily realistic to \del{directly }use the model \del{for performing}\add{directly to perform} MPC. Therefore, we first propose a low-dimensional reduction of the schooling model based on the weighted average of individual directions. {In the reduced-order model, the alignment effects are encoded as another set of multiplicative weights\del{, which are} equal to the inverse of the out-degrees of the alignment-interaction network, whereas the attraction effects are reduced to a single additive input\del{-term} \add{term}.} Furthermore, our analysis of the reduction error suggests the effectiveness of setting the weight on directions to be the normalized eigenvector centrality of the alignment-interaction network of fish schooling. \del{We then}\add{Then, we} perform \del{a model predictive control}{an MPC} of fish schooling \del{by }using the reduced model for prediction. We illustrate the effectiveness of using the normalized eigenvector centrality \del{as well as}\add{and} the numerical efficiency from model reduction via numerical simulations. 

\add{The contributions of the paper can be summarized as follows. First, the paper presents reduced-order models and  their reduction errors analytically for the fish schooling model proposed by Gautrais et al.~\cite{Gautrais2008}. One reduction uses static weights to aggregate variables within the school, while the other uses dynamically changing weights determined by the normalized eigenvector centrality of an interaction network within the school. Second, we introduce several formulas for analyzing the difference between the original and reduced-order models for evaluating the reduction errors. Third, we both analytically and numerically confirm the superiority of the latter reduction using the normalized eigenvector centrality by extensive MPC simulations.}

{\del{The current}\add{This} research is motivated by \del{emerging attention}\add{the increasing attention}~\cite{Poletto2015} \del{on}\add{received by} light guidance devices for modifying fish school behaviors to prevent them from \add{being} trapped into harmful objects such as water diversions. Several experimental research \del{works}\add{studies} confirm repulsive~\cite{Richards2007,Elvidge2019} and attractive~\cite{Marchesan2005,Ford2018,Kim2019} actions of fish against light guidance devices, \del{suggesting}\add{which suggest} the devices' potential applications toward the control of fish behaviors. However, there is scarce theoretical research \del{leveraging the}\add{that leverages} qualitative findings in the aforementioned experimental research to understand how \del{the }light guidance devices should be actuated \del{to achieve}\add{for achieving} a prescribed objective on the movement of fish schools. This research aims to shed light on the problem from the perspective of systems and control theory as well as network science.}

This paper is organized as follows. \add{In Section~\ref{sec:lit}, we overview related works.} In Section~\ref{sec:model}, we \del{give}\add{present} an overview of the fish schooling model by Gautrais et al.~\cite{Gautrais2008} and formulate the tracking control problem studied in this paper. In \del{Sections~4 and~5}\add{Section~\ref{sec:reduction:static}}, we present reduced-order models for the fish schooling model based on static weights and centrality-based weights, respectively. Numerical simulations are presented in Section~\ref{sec:num}.

\del{\subsection{\del{Mathematical Preliminaries}}}

\del{For a vector~$x\in \mathbb R^3$ and a constant $\theta \geq 0$, we define the operator }
\del{\begin{equation*}
    \del{R_{x, \theta} \colon \mathbb R^{3}\backslash \{0\}
    \to 
    \mathbb R^{3}\backslash \{0\}}
\end{equation*}}
\del{that rotates a given nonzero vector toward the direction of~$x$ by angle~$\theta\geq 0$. If $x=0$, then we regard $R_{x, \theta}$ as the identity operator. We remark that the rotation caused by $R_{x, \theta}$ is not supposed to terminate at the direction of~$x$; therefore, $R_{x, 2\pi}$ coincides with the identity operator. Then, for vectors $x, y \in \mathbb R^3$, we define }
\del{\begin{equation*}
    \del{\angle(x, y)
    =}
    \begin{cases}
    \del{0,}&\del{\mbox{if $x=0$ or $y=0$}, }
    \\
    \del{\mbox{angle between $x$ and~$y$},}&\del{\mbox{otherwise}.}
    \end{cases}
\end{equation*}}
\del{We also define the normalization operator}
\del{\begin{equation*}
    \del{\phi\colon \mathbb R^3\backslash \{0\} \to \mathbb R^3\backslash \{0\} \colon x\mapsto \frac{x}{\lVert x \rVert}, }
\end{equation*}}
\del{where $\norm{\cdot}$ denotes the Euclidean norm.}
\del{For a fixed $x\neq 0$, a simple calculation gives us a second-order estimate} 
\del{\begin{equation*}
    \del{\Norm{\phi(x+y) - \left(
    \phi(x) + \frac{1}{\norm{x}} y - \frac{y^\top x}{\norm{x}^{3}}x 
    \right)}=o(\norm{y}^2)}
\end{equation*}}
\del{as $y \to 0$. The following lemma \del{on this operator}{makes this estimate precise and} plays an important role in this paper.} 

\del{\begin{lemma}
\del{For all $x, y \in \mathbb R^3$ such that $x\neq 0$ and~$x+y\neq 0$, define }
\del{\begin{equation*}
    \del{\delta\phi(x, y) = \phi(x+y) - \left(
    \phi(x) + \frac{1}{\norm{x}} y - \frac{y^\top x}{\norm{x}^{3}}x 
    \right).}
\end{equation*}}
\del{Then, there exists a scalar $C \geq 0$ such that }
\begin{equation*}
\del{\norm{\delta \phi(x, y)} < \frac{C}{\norm x^2}\norm{y}^2 }
\end{equation*}
\del{for all $x\neq 0$ and~$y \in \mathbb R^3 \backslash \{-x\}$.} 
\end{lemma}}

\del{\begin{proof}
\del{See \ref{app:}.}
\end{proof}}

\del{For a set of vectors $v_1, \dotsc, v_N$ with the same dimension and scalars $\alpha_1, \dotsc, \alpha_N$, we often use the shorthand notation }
\del{\begin{equation*}
    \del{\langle v\rangle_\alpha = \sum_{i=1}^N \alpha_i v_i.} 
\end{equation*}}
\del{In the special case where $\alpha_1 = \cdots = \alpha_N = c$ for a constant~$c$, we write}
\del{\begin{equation*}
    \del{\langle v\rangle_c = c \sum_{i=1}^N v_i.}
\end{equation*}}

\section{\add{Related Work}}\label{sec:lit}

\add{Although MPC approach \cite{Rossiter2017} offers many advanced features over classical control techniques, the approach suffers from an important and fundamental challenge of computational complexity in optimization. This complexity is caused by MPC's nature of  receding horizon, which requires that  optimization problems be solved in real time. One approach to solve this problem is model reduction and it can reduce the complexity of the optimization problems. Hovland et al.~\cite{Hovland2008} presented a framework for the constrained optimal real-time MPC of linear systems combined with model reduction in the output feedback implementation. A reduced-order model is derived from a goal-oriented formulation that enables efficient control. Lohning et al.~\cite{Lohning2014} presented a novel MPC scheme for linear systems based on reduced-order models for prediction and combined them with an error-bounding system. Explicit time- and input-dependent bounds on the model-order reduction error are employed for realizing constraint satisfaction, recursive feasibility, and asymptotic stability. Recently, Lorenzetti et al.~\cite{Lorenzetti2022} proposed a reduced-order MPC scheme to solve robust output feedback constrained optimal control problems for high-dimensional linear systems. They introduced a projected reduced-order model, which improves computational efficiency and provides robust constraint satisfaction and stability guarantees.}

\add{Some results on the MPC of nonlinear systems using model reduction have been reported in the literature. For example, Wiese et al.~\cite{Wiese2015} presented an MPC method for gas turbines. They specifically developed a lower-order internal model from a physics-based higher-order model using rigorous time-scale separation arguments that can be extended to various gas turbine systems. Zhang and Liu~\cite{Zhang2019c} considered the problem of the economic model predictive control of wastewater treatment plants based on model reduction. Their reduction is based on a technique called reduced-order trajectory segment linearization. The authors showed via numerical simulations that, while the proposed methods lead to improved computational efficiency, they do not involve reduced control performance.}

\add{Network science has introduced various types of centrality measures to determine the relative importance of nodes in a network under respective circumstances~\cite{Boldi2014}. Given this context, the interplay between control and centrality has been  actively investigated. Liu et al.~\cite{Liu2012b}
introduced the concept of control centrality to quantify the ability of a single node to control the entire network. Inspired by the  relationship between control centrality and the hierarchical structure in networks, the authors  designed efficient attack strategies against the controllability of malicious networks.  Fitch et al.~\cite{Fitch2016} showed that  the tracking performance of any leader set within a multiagent system can be quantified by a novel centrality measure called joint centrality. For both single and multiple leaders, the authors have analytically proven the effectiveness of the centrality measure for leader selection.}

\section{Fish Schooling Model}\label{sec:model}

In this section, we \del{first give}\add{provide} an overview of the fish schooling model presented by Gautrais et al.~\cite{Gautrais2008} in Section~\ref{subsec:auto}. \del{We then}\add{Then, we} present a model of a \emph{controlled} fish schooling model in Section~\ref{subsec:dontrol}. Finally, in Section~\ref{subsec:track}, we state the tracking problem \del{that is }studied in this paper. 

\subsection{Autonomous Model}\label{subsec:auto}

\add{The mathematical models of swarming behavior can be divided into two categories: the Eulerian approach~\cite{Bayen2006} and the Lagrangian approach~\cite{Kumar2021a,Kumar2021}. The Eulerian approach considers the swarm to be a continuum described by its density, and its time evolution is described by a partial differential equation. The Lagrangian approach considers the state of each individual (position, instantaneous velocity, and instantaneous acceleration) and its relationship to other individuals in the swarm. It is an individual-based approach, and therefore, velocity and acceleration can be influenced by the spatial coordinates of the individuals. The time evolution of the state is described by ordinary or stochastic differential equations.}

The fish schooling model by Gautrais et al.~\cite{Gautrais2008} \add{employs the Lagrangian approach and} consists of a set of fish considered as point masses in the three dimensional Euclidean space~$\mathbb R^3$. We denote \del{by $N$ }the number of fish \add{by $N$}. \add{Within the model, fish agents move in the three dimensional space~$\mathbb R^3$ in discrete time steps.} For each \add{index}~$i\in \{1, \dotsc, N\}$ \add{and time~$k = 0, 1, 2, \dots$}, we let $x_i(k) \in \mathbb R^3$ denote the position of the $i$th fish, and \add{we} let $V_i(k) \in \mathbb R^3$ denote the unit vector \del{representing}\add{that represents} the direction of the $i$th fish. In the schooling model, the positions of fish are dynamically updated by\del{ the difference equation}
\begin{equation}\label{eq:_referred_xv}
    x_i(k+1) = x_i(k) +  \tau v 
    V_i(k),\ i=1, \dotsc, N, 
\end{equation}
where $v > 0$ denotes the speed per unit time of fish and~$\tau > 0$ represents the length of the interval for discretizing the motion of fish. 

In the schooling model, each fish recognizes its neighbor fish to dynamically update its direction. \del{Specifically, it is supposed that the}\add{The} $i$th fish \del{has}\add{is assumed to have} the following mutually disjoint sets for recognizing other fishes\del{;}\add{:} the region of repulsion $Z_{r, i}(k)\subset \mathbb R^3$, \del{the }region of orientation $Z_{o, i}(k)\subset \mathbb R^3$, and \del{the }region of attraction $Z_{a, i}(k)\subset \mathbb R^3$. \del{The precise definition of these regions in~\mbox{\cite{Gautrais2008}} is stated later in this subsection. }Using these sets, we define the set of neighbors of the $i$th fish by 
\begin{equation*}
    \begin{aligned}
    \mathcal N_{r, i}(k) & = 
    \{j \in \{1, \dotsc, N\}\backslash \{i\} \mid x_j(k) \in Z_{r, i}(k) \},\notag 
    \\
    \mathcal N_{o, i}(k) & = 
    \{j \in \{1, \dotsc, N\}\backslash \{i\} \mid x_j(k) \in Z_{o, i}(k) \},\\
    \mathcal N_{a, i}(k) & = 
    \{j \in \{1, \dotsc, N\}\backslash \{i\} \mid x_j(k) \in Z_{a, i}(k) \}. 
    \end{aligned}
\end{equation*}
We then define \del{the following vectors dependent on each fish~$i$:}\add{the forces of repulsion, orientation, and attraction that can act on the $i$th fish at time~$k$ by} 
\begin{align}
    \del{R}{E}_i(k) &= -\sum_{j\in \mathcal N_{r, i}(k)}\phi(x_j(k)-x_i(k)),\label{eq:def:R_i}  
    \\
    O_i(k) &= \sum_{j\in \mathcal N_{o, i}(k)}V_j(k),\label{eq:def:O_i} 
    \\
    A_i(k) &= \sum_{j\in \mathcal N_{a, i}(k)}\phi(x_j(k)-x_i(k)), \label{eq:def:A_i}
\end{align}
\del{which represent the forces of repulsion, orientation, and attraction that can act on the $i$th fish at time~$k$}\add{where
\begin{equation}\label{eq:def:phi}
    \phi\colon \mathbb R^3\backslash \{0\} \to \mathbb R^3\backslash \{0\} \colon x\mapsto \frac{x}{\lVert x \rVert}\del{,} 
\end{equation}
is the normalization operator with respect to the Euclidean norm~$\norm{\cdot}$}. These forces are used to determine the \emph{desired direction} of the $i$th fish at time~$k$ as
\del{\begin{equation*}
    \del{D_i(k) =}
    \begin{cases}
    \del{O_i(k) + \eta A_i(k),}&\del{\mbox{if $\mathcal N_{r, i}(k) = \emptyset$,}}
    \\
    \del{R}\del{{E}_i(k),}&\del{\mbox{otherwise,}}
    \end{cases}
\end{equation*}}
\begin{equation}\label{eq:D_i:gautrais}
    \add{D_i(k) =
    \begin{cases}
    {E}_i(k),&\mbox{if $\mathcal N_{r, i}(k)\neq \emptyset$,}
    \\
    O_i(k) + \eta A_i(k),&\mbox{if $\mathcal N_{r, i}(k) =  \emptyset$ and~$\mathcal N_{o, i}(k) \cup \mathcal N_{a, i}(k) \neq \emptyset$,}
    \\
    V_i(k),&\mbox{otherwise,}
    \end{cases}}
\end{equation}
where $\eta>0$ is a constant. \add{Equation \eqref{eq:D_i:gautrais} implies that the repulsion force~$E_i$ prevails if there is at least one neighbor in the region of repulsion. The $i$th fish tends to align its direction with the neighbors in the region of orientation or to stay close to the neighbors in the region of attraction when there is no neighbor in the region of repulsion but some neighbor exist in the regions of orientation or attraction, If there are no neighbors within any of the regions, the $i$th fish desires to maintain its direction.}

\del{Then, the}\add{The} authors in~\cite{Gautrais2008} introduce the following update law for the direction of fish. In the update law, the fish first rotates toward the desired direction with a turning rate~$\theta>0$\add{,} and\del{,} then, \add{it} performs a random rotation. \add{Let us first introduce some notations to formulate the update law. For a vector~$x\in \mathbb R^3$ and a constant $\theta \geq 0$, we define the operator 
\begin{equation*}
    R_{x, \theta} \colon \mathbb R^{3}\backslash \{0\}
    \to 
    \mathbb R^{3}\backslash \{0\}
\end{equation*}
that rotates a given nonzero vector toward the direction of~$x$ by angle~$\theta\geq 0$. The rotation caused by $R_{x, \theta}$ is not supposed to terminate at the direction of~$x$; therefore, $R_{x, 2\pi}$ coincides with the identity operator. We further introduce the following notation to denote the angle between vectors:}
\begin{equation*}
    \add{\angle(x, y)
    =}
    \begin{cases}
    \add{0,}&\mbox{\add{if $x=0$ or $y=0$}}, 
    \\
    \add{\mbox{angle between $x$ and~$y$},}&\add{\mbox{otherwise}. }
    \end{cases}
\end{equation*}
\del{This}\add{Now, the} update law \add{for the direction of fish} can be \del{formulated}\add{stated} as 
\begin{equation}\label{eq:rotation}
    V_i(k+1) = R_{\del{\epsilon_i(k), }\chi_i(k){, \epsilon_i(k)}} R_{\del{\phi_i(k),  }D_i(k){, \phi_i(k)}}V_i(k), 
\end{equation}
where
\begin{equation*}
    \phi_i(k) = \min\bigl(\angle\left(V_i(k), D_i(k)\right), \tau\theta\bigr)
\end{equation*}
represents the rotation angle toward the desired direction with turning rate~$\theta$, $\epsilon_i(k)$ represents the angle of the random rotation, and~$\chi_i(k)$ \del{is}\add{represents} the random vector drawn from the uniform distribution over the sphere. \add{The random rotation is introduced to model internal errors and external disturbances arising from, for example, the lack of precision in the perception processes.}

\begin{figure}[tb]
\centering
\add{\includegraphics[width=.45\linewidth]{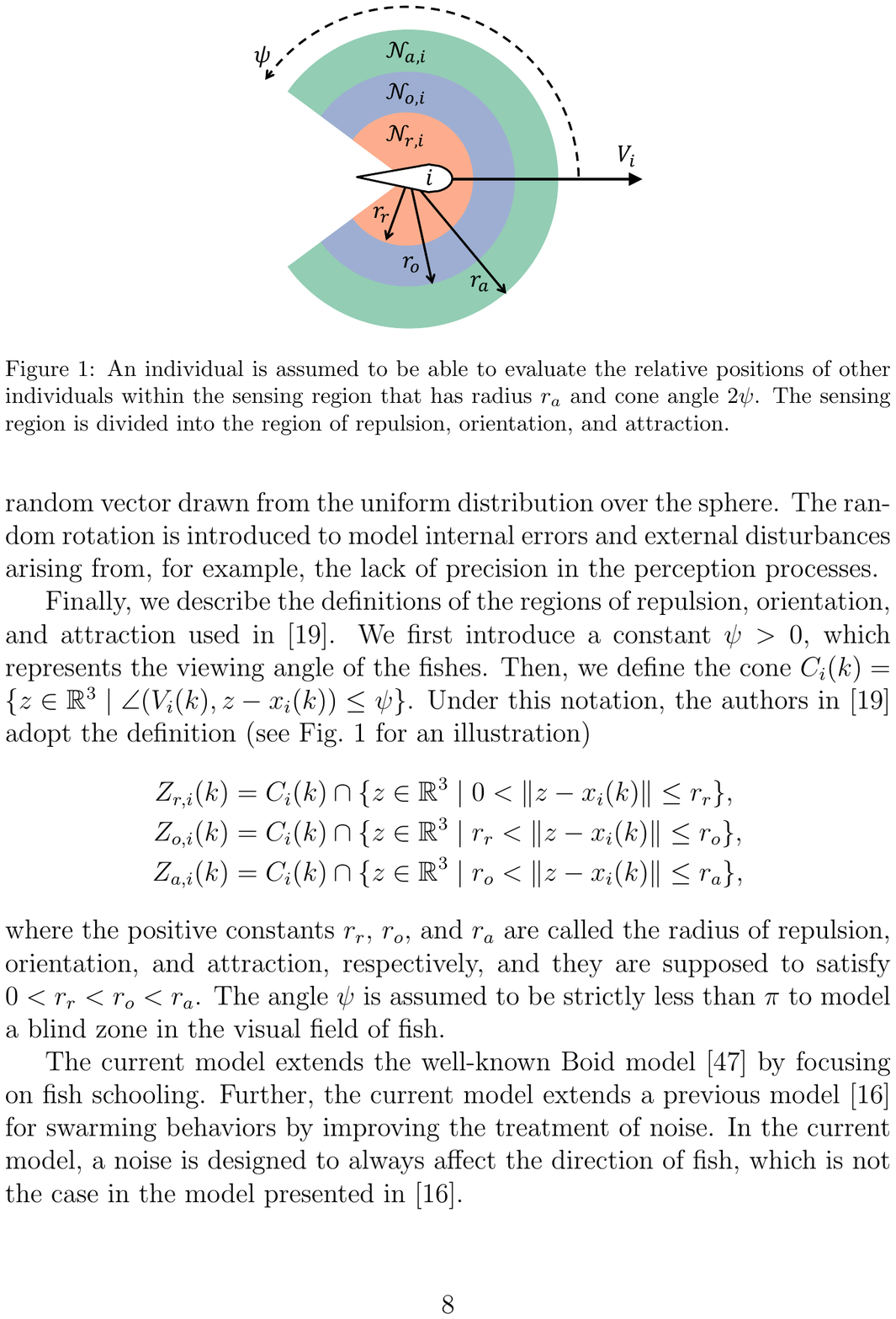}}
\caption{\add{An individual is assumed to be able to evaluate the relative positions of other individuals within the sensing region that has radius $r_a$ and cone angle~$2\psi$. The sensing region is divided into the region of repulsion, orientation, and attraction.%
}}
\label{fig:fish}
\end{figure}

Finally, we \del{below }describe the definitions of the regions of repulsion, orientation, and attraction used in~\cite{Gautrais2008}. We first introduce a constant $\psi>0$, which represents the viewing angle of the fishes. Then, we define the cone 
$C_i(k) = 
\{
z \in \mathbb R^3  \mid \angle(V_i(k), z-x_i(k))\leq \psi
\}$. 
Under this notation, the authors in~\cite{Gautrais2008} adopt the \del{following }definition \add{(see Fig.~\ref{fig:fish} for an illustration)}\del{:} 
\begin{equation*}
\begin{aligned}
    Z_{r, i}(k)  & = C_i(k) \cap \{
    z \in \mathbb R^3 \mid 0
    < \lVert z-x_i(k)\rVert \leq r_r
    \}, 
    \\
    Z_{o, i}(k) & = C_i(k) \cap \{
    z \in \mathbb R^3 \mid r_r
    < \lVert z-x_i(k)\rVert \leq r_o
    \}, 
    \\
    Z_{a, i}(k) & = C_i(k)  \cap \{
    z \in \mathbb R^3 \mid r_o
    < \lVert z-x_i(k)\rVert \leq r_a
    \}, 
\end{aligned}
\end{equation*}
where the positive constants $r_r$, $r_o$, and~$r_a$ are called the radius of repulsion, orientation, and attraction, respectively, and \add{they} are supposed to satisfy $0<r_r<r_o<r_a$. \add{The angle~$\psi$ is assumed to be strictly less than~$\pi$ to model a blind zone in the visual field of fish.}

\label{interpretation of the model}\add{The current model extends the well-known Boid model~\cite{Reynolds1987} by focusing on fish schooling. Further, the current model extends a previous model~\cite{Couzin2002} for swarming behaviors by improving the treatment of noise. In the current model, a noise is designed to  always affect the direction of fish, which is not the case in the model presented in~\cite{Couzin2002}.}

\subsection{Controlled Model}\label{subsec:dontrol}

The fish schooling model \add{proposed} by Gautrais et al.~\cite{Gautrais2008} does not consider the possibility of external stimulus to the fish school. In this paper, we adopt a simple modification of the model to incorporate external stimulus. For simplicity, we assume that the stimulus attracts the $i$th fish to the direction of a unit vector~$u_i\in\mathbb R^3$. \del{We also}\add{Further, we} assume that each fish has its own sensitivity parameter~$\xi_i \geq 0$, which can account for the heterogeneity within the swarm of fishes. We then propose that the original equation~\eqref{eq:D_i:gautrais} for the desired direction be modified as\del{ follows:} 
\begin{equation}\label{eq:D_i:gautrais+control}
    D_i(k) =
    \begin{cases}
    O_i(k) + \eta A_i(k)+ \xi_i u_i(k),&\mbox{if $\mathcal N_{r, i}(k)= \emptyset$,}
    \\
    \del{R}{E}_i(k),&\mbox{otherwise.}
    \end{cases}
\end{equation}
\add{We introduce the notation~$u_i$ for the generality of the theory developed in this paper; although the notation~$u_i$ suggests the possibility of applying stimulus~$u_i$ designed independently for each individual, realizing such a stimulus is not realistic. In reality, the degree of freedom in designing the set of stimulus~$\{u_i\}_{i=1}^N$ can be severely limited.}\label{whythisassumptionisintroduced}

\subsection{Tracking Problem}\label{subsec:track}

\del{For}\add{We consider a tracking problem for} the controlled schooling model introduced in the last subsection\del{, we consider a tracking problem in this paper}. Assume that a reference set~$\mathcal R \subset \mathbb R^3$ is given. Our objective is to design the external stimulus $u_i$ ($i=1, \dotsc, N$) appearing in \eqref{eq:D_i:gautrais+control} to guide the fish school and make the mass center 
\begin{equation}\label{eq:isreferred_def:center}
    c = \frac{1}{N}\sum_{i=1}^N x_i
\end{equation}
of the fish school lie on the set~$\mathcal R$. The closeness is measured by the distance 
\begin{equation}\label{eq:defe}
e(k) = \dist (c(k), \mathcal R ), 
\end{equation}
where $\dist(\cdot, \cdot)$ denotes the distance between a vector and a subset of~$\mathbb R^3$ and is defined by $\dist(y, \mathcal Z) = \inf\{\norm{y-z} \mid z \in \mathcal Z \}$. 
 
The following \del{situation}\add{control problem} is considered in this paper. 
We assume that, at every $T$ discrete steps (i.e., at every $T \tau$ units of time in the continuous time), we can observe the fish school to obtain \del{the }positions and directions of all fish. \del{We also}\add{Further, we} assume that we can update the value of the external stimulus $u_i$ only at the time of these observations; i.e., for all $i\in \{1, \dotsc, N\}$ and \del{$k\geq 0$}\add{$\ell\geq 0$}, we impose the constraint
\begin{equation}\label{eq:isreferred_periodicUpdate}
    \del{u_i(k) = u_i(\lfloor k/T\rfloor T),}
    \add{u_i(\ell T) = u_i(\ell T+1)= \cdots = u_i((\ell+1) T -1).}
\end{equation}
\del{where $\lfloor \cdot \rfloor$ denotes the floor function. }We \del{suppose that we }use the observation $\{x_i(\ell T), V_i(\ell T)\}_{i=1, \dotsc, N}$ at the $\ell$th step for determining the stimulus within the subsequent sampling interval (i.e., at times $k=(\ell+1)T, (\ell+1)T+1, \dotsc, (\ell+2)T-1$). Therefore, it is required that the stimulus computation finishes within $T\tau$ units of time in the continuous time. 

The high nonlinearity of the schooling model suggests that the framework of the MPC is \del{most }appropriate \del{for determining}\add{to determine} the stimulus. Therefore, let us consider the \del{situation}\add{scenario} where we determine the stimulus by iteratively solving finite-horizon optimization problems to minimize the cost function 
\begin{equation*}
    \sum_{k=(\ell+1)T\del{+1}}^{(\ell+1)T+T_h}{e(k)}, 
\end{equation*}
where\del{$\hat c(k)$ represents the mass center of the fish school determined by a prediction model and} $T_h > 0$ denotes the length of the optimization horizon. 

One obvious option for prediction is to use the original schooling model. In this prediction model, one can use the original schooling dynamics as is except \add{with} the rotation dynamics~\eqref{eq:rotation} replaced by
\begin{equation}\label{eq:rotation:withoutrandom}
    V_i(k+1) =  R_{\del{\phi_i(k),  }D_i(k){, \phi_i(k)}}V_i(k), 
\end{equation}
where the random rotation operator~$R_{\del{\epsilon_i(k), }\chi_i(k){, \epsilon_i(k)}}$ is set to the identity operator. \del{Let us denote this}{This} prediction model{, denoted} by~$\Sigma_{\orig}${, results in the following optimization problem to be solved during the continuous-time interval $[\ell T \tau, (\ell+1) T \tau)$:
\begin{equation}\label{eq:clarifiedoptimizationproblem}
    \begin{aligned}
        \minimize_{\{u(k)\}_{k=(\ell +1)T}^{(\ell +1)T+T_h-1}} &\del{\eqref{eq:costFunction}}\add{ \sum_{k=(\ell+1)T\del{+1}}^{(\ell+1)T+T_h}{e(k)}}
        \\
        \subjectto &
        \mbox{\cref{eq:_referred_xv,eq:def:R_i,eq:def:O_i,eq:def:A_i,eq:D_i:gautrais+control,eq:isreferred_periodicUpdate,eq:rotation:withoutrandom,eq:isreferred_def:center,eq:defe}}
    \end{aligned}
\end{equation}
for determining the control input  $\{u(k)\}_{k=(\ell+1)T}^{k=(\ell+2)T-1}$ to be used in the next (continuous-time) interval $[(\ell+1) T \tau, (\ell+2) T \tau)$.} \add{A schematic of the control algorithm is illustrated in Fig.~\ref{fig:MPCorig}.} Although {the optimization problem \eqref{eq:clarifiedoptimizationproblem} based on} \del{the prediction model }$\Sigma_{\orig}$ could allow us to realize precise tracking \del{of}\add{by} the fish school, a natural concern in this context is the computational cost for performing predictions, \del{as}\add{because} the number of inter-fish interactions in the model grows quadratically with respect to the size~$N$ of the school. This potential numerical difficulty is illustrated in Section~\ref{sec:num}. \del{Therefore}\add{Thus}, it is desirable to construct reduced-order prediction models for performing numerically efficient MPC. 

\begin{figure}[tb]
    \centering
    \add{\includegraphics[width=.725\linewidth]{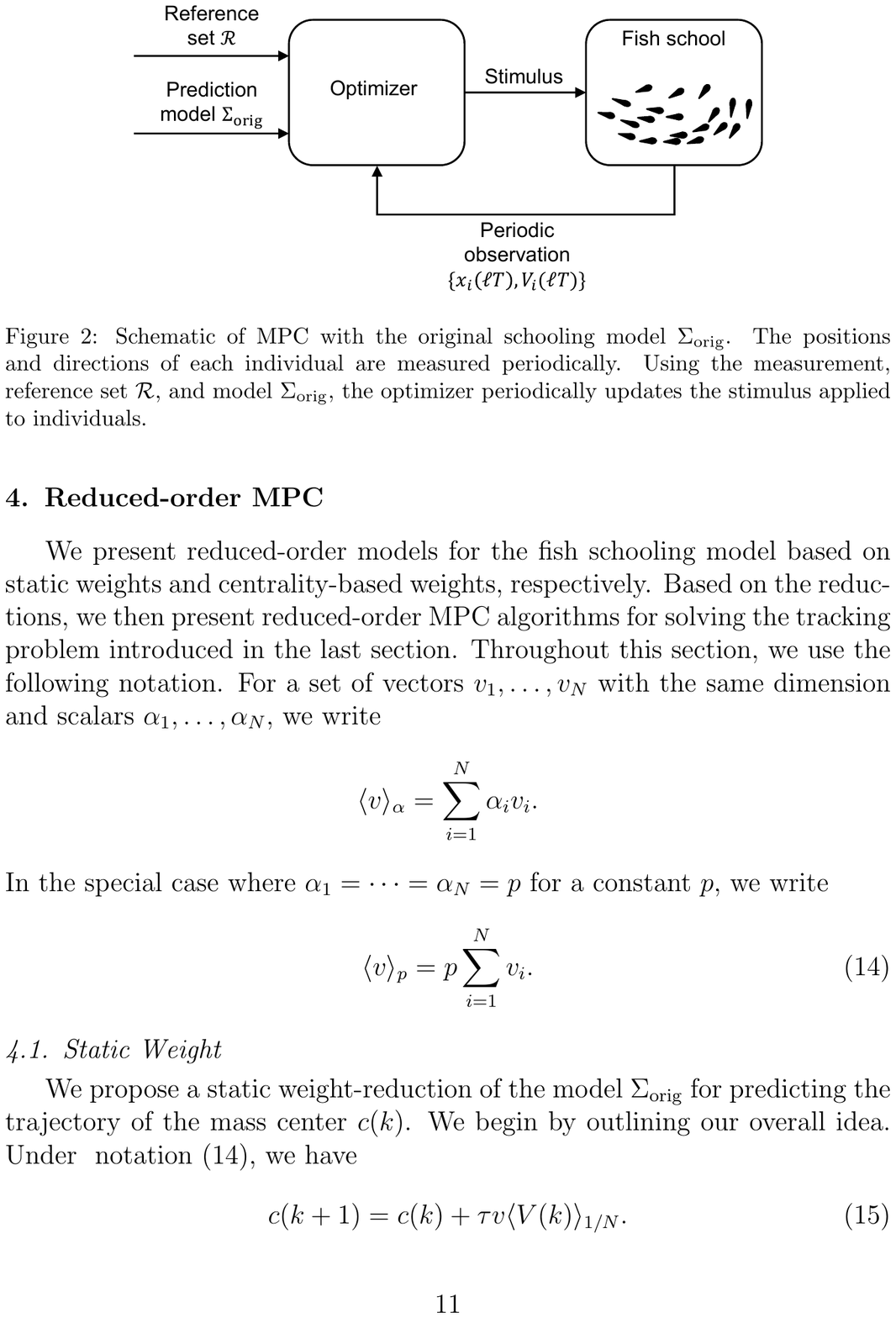}}
    \caption{\add{Schematic of MPC with the original schooling model~$\Sigma_{\orig}$. The positions and directions of each individual are measured periodically. Using the measurement, reference set~$\mathcal R$, and model~$\Sigma_{\orig}$, the optimizer periodically updates the stimulus applied to individuals.}}
    \label{fig:MPCorig}
\end{figure}

\section{\del{Static weight}\add{Reduced-order MPC}}\label{sec:reduction:static}

\add{We present reduced-order models for the fish schooling model based on static weights and centrality-based weights, respectively. Based on the reductions, we then present reduced-order MPC algorithms for solving the tracking problem introduced in the last section. Throughout this section, we use the following notation. For a set of vectors $v_1, \dotsc, v_N$ with the same dimension and scalars $\alpha_1, \dotsc, \alpha_N$, we write}
\begin{equation*}\label{notation}
    \add{\langle v\rangle_\alpha = \sum_{i=1}^N \alpha_i v_i.} 
\end{equation*}
\add{In the special case where $\alpha_1 = \cdots = \alpha_N = p$ for a constant~$p$, we write}
\begin{equation}\label{cnotation}
    \add{\langle v\rangle_{p} = p \sum_{i=1}^N v_i.}
\end{equation}

\subsection{\add{Static Weight}}

\del{In this section, we}\add{We} propose a static weight-reduction of the model~$\Sigma_{\orig}$ {for predicting the trajectory of the mass center~$c(k)$. \add{We begin by outlining our overall idea.} Under \del{the} notation~\eqref{cnotation}, we have
\begin{equation}\label{eq:cupdate}
    c(k+1) = c(k) + \tau v \langle V(k)\rangle_{1/N}.
\end{equation}
\add{Therefore, if we could construct an update low for $\langle V\rangle_{1/N}$ in terms of~$c$ and~$\langle V\rangle_{1/N}$, then the update law in conjunction with \eqref{eq:cupdate} serves as a reduced-order model for predicting the movement of the mass center. To proceed, let us tentatively consider the scenario in which}\del{If} the rotation speed~$\theta$ is relatively high\del{,}\add{. In this situation,} one can expect that the fish in the school can easily rotate to align with their desired directions. Indeed, for \add{a} sufficiently large $\theta$, equation~\eqref{eq:rotation:withoutrandom} implies $V_i(k+1) = \phi(D_i(k))$, which further implies \del{$\langle V(k+1) \rangle_{1/N} = \langle \phi(D(k)) \rangle_{1/N}$.}
\begin{equation*}
    \langle V(k+1) \rangle_{1/N} = \langle \phi(D(k)) \rangle_{1/N}.
\end{equation*}
\add{Therefore, approximation of~$\langle \phi(D(k)) \rangle_{1/N}$ in terms of~$c$ and~$\langle V \rangle_{1/N}$ allows us to construct a reduced-order model for prediction of the mass center.}

Motivated by this observation, \del{in this section }we propose an approximation formula for computing the quantity~$\langle \phi(D(k)) \rangle_{1/N}$}. \del{We specifically propose a reduced-order model for prediction based on the weighted sum~$\langle V(k) \rangle_\alpha = \sum_{i=1}^N \alpha_i V_i(k)$
of the direction vectors, where}{For the sake of generality of the theoretical development, \del{throughout this section, }we allow the weight~$1/N$ to be arbitrary} nonnegative numbers {$\alpha_1$, \dots, $\alpha_N$} satisfying
\begin{equation}\label{eq:sumtoone}
    \sum_{i=1}^N \alpha_i = 1. 
\end{equation}
\label{rolealpha}

\del{
\addtocounter{subsection}{-1}
\subsection{
\del{Reduction Error Analysis}
}
}

\subsubsection{\add{Reduction Error Analysis}}

We \del{begin by preparing}\add{present} some notations. Let $i, j\in \{1,\dotsc, N\}$ and~$k\geq 0$ be arbitrary. \del{Define the integer}\add{For the $i$th fish, let}
\begin{equation*}
    n_i(k) = \abs{\mathcal N_{o, i}(k)}\del{.}
\end{equation*}
\add{denote the number of its neighbors within the region of orientation.}
Define $\omega_{ij}(k) \in \{0, 1\}$ by 
\begin{equation*}
    \omega_{ij}(k) = 
    \begin{cases}
    1,&\mbox{if $j \in \mathcal N_{o, i}(k)$}, 
    \\
    0,&\mbox{otherwise.}
    \end{cases}
\end{equation*}
\del{Also}\add{In addition}, let $\theta_{ij}(k) = \angle(V_i(k), V_j (k))$ \add{denote the angle between the directions of the $i$th and~$j$th fish}. Then, for a vector~$y = [y_1,\, \dotsc , y_N]^\top \in \mathbb R^N$, we define \del{the}\add{a} quadratic \del{form}\add{function of the angles as} 
\begin{equation*}
\begin{multlined}[.85\linewidth]
    Q_{i, y}(k) = 
    \frac{1}{n_i(k)^2}\sum_{j=1}^N \sum_{j'=1}^N \frac{\abs{\omega_{ij}(k)-y_j}\abs{\omega_{ij'}(k)-y_j'}}{2}\theta_{jj'}(k)^2. 
    \end{multlined}
\end{equation*}

\del{We also}\add{Further, we} introduce constants \del{playing}\add{that play} an important role in our analysis. 
First, define  
\begin{equation*}
    \tilde \pi_i(k) = 1-\pi_i(k) \in [0, 1], 
\end{equation*}
where $\pi_i(k) = \norm{
O_i(k)
}/n_i(k) \in [0, 1]$ 
\del{is}\add{represents} the polarization~\cite{Vicsek1995,Couzin2002,Gautrais2008} of the fish in the region of orientation $Z_{o, i}(k)$. We use the constant~$\tilde \pi_i(k)$ to quantify how aligned the directions of the fish in the region of orientation $Z_{o, i}(k)$ \add{are}; the smaller $\tilde \pi_i(k)$, the more aligned \add{are} the direction\add{s} of the fish in the region. Similarly, we define 
\begin{equation*}
    \tilde \pi_\alpha(k) = 1-\norm{\langle V(k)\rangle_\alpha}\in [0, 1]\del{.}\add{,} 
\end{equation*}
\add{which quantifies how aligned the directions of all the fish \add{are} when weighted with $\alpha$.}
We \del{also }define the ratio 
\begin{equation*}
\rho_i(k) = \frac{\norm{\eta A_i(k)+\xi_i u_i(k)}}{\norm{O_i(k)}} \geq 0. 
\end{equation*}
This quantity measures how dominant the force of orientation is in the desired direction~\eqref{eq:D_i:gautrais+control} when repulsion is \del{not present}\add{absent}. For example, if $\rho_i(k) = 0$, \del{then }both $A_i(k)$ and~$u_i(k)$ are zero\del{ and, therefore,}\add{, and therefore,} the movement of the $i$th fish is completely determined by the orientation term~$O_i(k)$. \del{Finally, we define}
\del{\begin{equation*}
\begin{multlined}
    \del{C_\phi = \inf\{
    C\geq 0\mid \mbox{\eqref{old:eq:f(x,y)bound} holds for all $(x, y)\in \mathbb R^3\times \mathbb R^3$}}
    \\\del{\mbox{such that $x\neq 0$ and~$x+y\neq 0$} 
    \},}
\end{multlined}
\end{equation*}}
\del{whose existence is guaranteed by Lemma~\ref{old:lem:2}.} 

\del{We are now ready to state the \add{first} main result of this section. }\add{Our objective is to propose an approximation formula for $\langle \phi(D(k))\rangle_\alpha$ weighted by $\alpha$. {Let us} consider the approximation 
\begin{equation}\label{eq:approx}
\langle \phi(D(k)) \rangle_\alpha \approx \norm{\langle V(k)\rangle_\alpha}\,\phi\bigl(\langle V(k)\rangle_\alpha + A(k) + u(k)\bigr)\del{,}  
\end{equation}
\add{with vectors} 
\begin{equation}\label{eq:def}
    \add{A(k) =  \sum_{i=1}^N \frac{\alpha_i}{n_i(k)} \eta A_i(k),\ 
    u(k) = \sum_{i=1}^N \frac{\alpha_i }{n_i(k)}\xi_i u_i(k)} 
\end{equation}
\add{which represent the aggregated force of attraction and external stimulus, respectively. Now, the following theorem provides a bound for the approximation~\eqref{eq:approx}.}
Within the theorem, }\del{The}\add{the} dependence of relevant variables on the time index $k$ is omitted for brevity\del{in the theorem}. 

\begin{theorem}\label{thm:static}
Assume that 
\begin{equation}\label{eq:empty}
 \mathcal N_{r, i} = \emptyset   
\end{equation}
and 
\begin{equation}
{n_i > 0,} \quad 
O_i \neq 0    \label{eq:OIneq0}
\end{equation}
for all $i =1, \dotsc, N$. {Let $\alpha_1$, \dots, $\alpha_N$ be nonnegative constants satisfying \eqref{eq:sumtoone}.} \del{Define}\add{If $\langle V \rangle_\alpha \neq 0$, then the norm of} the \add{residual} vector 
\begin{equation}\label{eq:eps}
\epsilon_\alpha = \langle \phi(D) \rangle_\alpha - \norm{\langle V\rangle_\alpha}\,\phi\bigl(\langle V\rangle_\alpha + A + u\bigr)\del{,}
\end{equation}
\del{where }\del{\begin{equation*}
    \del{A =  \sum_{i=1}^N \frac{\alpha_i }{n_i} \eta A_i,\ 
    u = \sum_{i=1}^N \frac{\alpha_i }{n_i}\xi_i u_i.}
\end{equation*}}
\del{If $\langle V \rangle_\alpha \neq 0$, then}\add{satisfies}
\begin{equation}\label{eq:error:alpha}
\begin{multlined}[.85\linewidth]
    \norm{\epsilon_\alpha} 
    \leq 
\del{C_\phi}\add{2}
    \frac{\norm{A+u}^2}{\norm{\langle V\rangle_\alpha}}    +
    \sum_{i=1}^N
    \alpha_i\left(\tilde \pi_i + \sqrt{Q_{i, n_i \alpha}}\right)
    \\+
    \add{2}\sum_{i=1}^N
    \alpha_i\rho_i\left(\del{2}\tilde \pi_\alpha + \del{C_\phi} \rho_i +\del{4}\add{2} \tilde \pi_i + \del{2} \sqrt{Q_{i, n_i \alpha}}\right).
    \end{multlined}
\end{equation}
\end{theorem}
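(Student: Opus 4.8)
The plan is to linearize the normalization operator~$\phi$ wherever it appears and to control the resulting residuals with a single geometric estimate on the dispersion of unit directions. Under \eqref{eq:empty} and \eqref{eq:OIneq0} only the middle branch of \eqref{eq:D_i:gautrais+control} is active, so $D_i = O_i + w_i$ with $w_i = \eta A_i + \xi_i u_i$; by definition $\rho_i = \norm{w_i}/\norm{O_i}$, and $A+u = \sum_{i=1}^N (\alpha_i/n_i) w_i$. I would apply the second-order expansion $\phi(x+y) = \phi(x) + \norm{x}^{-1}y - (y^\top x)\norm{x}^{-3}x + \delta\phi(x,y)$, with the remainder bound $\norm{\delta\phi(x,y)} \le 2\norm{y}^2/\norm{x}^2$, in two places: to each summand $\phi(O_i + w_i)$ about $O_i$, and to $\phi(\langle V\rangle_\alpha + A + u)$ about $\langle V\rangle_\alpha$ (both expansions being legitimate wherever $O_i + w_i \neq 0$ and $\langle V\rangle_\alpha + A + u \neq 0$, i.e. wherever $\epsilon_\alpha$ is defined). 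Since $\norm{\langle V\rangle_\alpha}\,\phi(\langle V\rangle_\alpha) = \langle V\rangle_\alpha$, substituting both expansions into \eqref{eq:eps} and using $\sum_i \alpha_i = 1$ rewrites $\epsilon_\alpha$ as a sum of four groups: the zeroth-order mismatch $T_1 = \sum_{i=1}^N \alpha_i(\phi(O_i) - \langle V\rangle_\alpha)$, a first-order mismatch $T_2$, the per-fish remainders $T_3 = \sum_{i=1}^N \alpha_i\,\delta\phi(O_i,w_i)$, and the outer remainder $T_4 = -\norm{\langle V\rangle_\alpha}\,\delta\phi(\langle V\rangle_\alpha, A+u)$.

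The geometric core of the argument handles $T_1$. Setting $c_j = \omega_{ij} - n_i\alpha_j$, the identities $\sum_{j} \omega_{ij} = n_i$ and \eqref{eq:sumtoone} give $\sum_j c_j = 0$, while a direct computation yields $\tfrac{1}{n_i}\sum_j c_j V_j = \pi_i\,\phi(O_i) - \langle V\rangle_\alpha$. Expanding $\norm{\sum_j c_j V_j}^2 = \sum_{j,j'} c_j c_{j'}\cos\theta_{jj'}$, subtracting the vanishing quantity $(\sum_j c_j)^2$, and using $1 - \cos\theta_{jj'} = 2\sin^2(\theta_{jj'}/2) \le \theta_{jj'}^2/2$ together with $c_j c_{j'} \le \abs{c_j}\abs{c_{j'}}$ gives $\norm{\pi_i\,\phi(O_i) - \langle V\rangle_\alpha}^2 \le Q_{i, n_i\alpha}$. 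Combining this with the splitting $\phi(O_i) - \langle V\rangle_\alpha = \tilde\pi_i\,\phi(O_i) + (\pi_i\,\phi(O_i) - \langle V\rangle_\alpha)$ yields $\norm{\phi(O_i) - \langle V\rangle_\alpha} \le \tilde\pi_i + \sqrt{Q_{i, n_i\alpha}}$, hence $\norm{T_1} \le \sum_i \alpha_i(\tilde\pi_i + \sqrt{Q_{i, n_i\alpha}})$, which is exactly the middle term of \eqref{eq:error:alpha}. The remainder groups are immediate from the bound on $\delta\phi$: $\norm{T_3} \le 2\sum_i \alpha_i \rho_i^2$, and $\norm{T_4} \le 2\norm{A+u}^2/\norm{\langle V\rangle_\alpha}$, the latter supplying the first term of \eqref{eq:error:alpha}.

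The first-order mismatch $T_2$ is where the work concentrates, and I expect it to be the main obstacle. Introducing the tangential projections $P_i = I - \phi(O_i)\phi(O_i)^\top$ and $\hat P = I - \hat V\hat V^\top$ with $\hat V = \phi(\langle V\rangle_\alpha)$, the two linearizations contribute, per fish, $\norm{O_i}^{-1}P_i w_i$ and (after scaling by $\norm{\langle V\rangle_\alpha}$) the aggregated $\hat P(A+u)$; since $A+u = \sum_i (\alpha_i/n_i) w_i$, this collects into $T_2 = \sum_i (\alpha_i/n_i)\bigl[\pi_i^{-1} P_i - \hat P\bigr] w_i$. The difficulty is that these projections share neither the scaling~$1/\pi_i$ nor the direction, so I would decompose $\pi_i^{-1} P_i - \hat P = (\tilde\pi_i/\pi_i) P_i + (P_i - \hat P)$ and bound the pieces separately: the first using $\norm{P_i w_i} \le \norm{w_i}$, and the second using the elementary estimate $\norm{(aa^\top - bb^\top)w} \le 2\norm{a-b}\,\norm{w}$ for unit vectors $a,b$, together with $\norm{\phi(O_i) - \hat V} \le \tilde\pi_i + \sqrt{Q_{i,n_i\alpha}} + \tilde\pi_\alpha$ (which follows from the geometric core and the identity $\langle V\rangle_\alpha = (1 - \tilde\pi_\alpha)\hat V$). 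Exploiting $\norm{w_i} = \rho_i\norm{O_i} = \rho_i n_i \pi_i$ cancels the $n_i$ and $\pi_i$ factors and gives $\norm{T_2} \le \sum_i \alpha_i \rho_i\bigl(3\tilde\pi_i + 2\sqrt{Q_{i,n_i\alpha}} + 2\tilde\pi_\alpha\bigr)$.

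Finally I would merge $T_2$ and $T_3$ under the common factor $2\sum_i \alpha_i \rho_i(\,\cdot\,)$: since $\norm{T_3} \le 2\sum_i \alpha_i\rho_i^2$ contributes the $\rho_i$ term and $3\tilde\pi_i \le 4\tilde\pi_i$, $2\sqrt{Q_{i,n_i\alpha}} \le 4\sqrt{Q_{i,n_i\alpha}}$, the sum $\norm{T_2} + \norm{T_3}$ is dominated by the last term $2\sum_i \alpha_i\rho_i(\tilde\pi_\alpha + \rho_i + 2\tilde\pi_i + 2\sqrt{Q_{i,n_i\alpha}})$ of \eqref{eq:error:alpha}, and the triangle inequality $\norm{\epsilon_\alpha} \le \norm{T_1} + \norm{T_2} + \norm{T_3} + \norm{T_4}$ completes the proof. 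Beyond the bookkeeping in $T_2$, the only genuinely delicate points are keeping the direction-comparison estimate for $\norm{\phi(O_i) - \hat V}$ sharp and verifying that the global remainder bound $\norm{\delta\phi(x,y)} \le 2\norm{y}^2/\norm{x}^2$ holds with constant~$2$, which is what replaces the generic constant~$C_\phi$ in the stated inequality.
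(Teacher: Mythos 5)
Your proposal is correct and follows the same skeleton as the paper's proof: the same second-order expansion of $\phi$ with remainder bound $\norm{\delta\phi(x,y)}\le 2\norm{y}^2/\norm{x}^2$ (the paper's Lemma~\ref{lem:2}), applied once per fish about $O_i$ and once at the aggregate level about $\langle V\rangle_\alpha$; the same dispersion estimate $\norm{O_i/n_i-\langle V\rangle_\alpha}\le\sqrt{Q_{i,n_i\alpha}}$ obtained from $1-\cos\theta_{jj'}\le\theta_{jj'}^2/2$ and the vanishing sum $\sum_j(\omega_{ij}-n_i\alpha_j)=0$ (Lemma~\ref{lem:deltaOi}); and the same final triangle inequality. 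Your groups $T_1$, $T_3$, $T_4$ are exactly the paper's $\sum_i\alpha_i\zeta_{i,2}$, $\sum_i\alpha_i\zeta_{i,1}$, and $-\norm{\langle V\rangle_\alpha}\,\delta\phi(\langle V\rangle_\alpha,A+u)$, with identical bounds. The one place you genuinely diverge is the first-order mismatch $T_2$: the paper splits it per fish into four explicit vectors $\zeta_{i,3},\dotsc,\zeta_{i,6}$ (the $\norm{O_i}^{-1}$ vs.\ $n_i^{-1}$ and $\norm{O_i}^{-3}$ vs.\ $n_i^{-3}$ scaling mismatches, the $\norm{\langle V\rangle_\alpha}^{-2}$ normalization, and the $O_i$-versus-$\langle V\rangle_\alpha$ direction swap), bounded respectively by $\tilde\pi_i\rho_i$, $3\tilde\pi_i\rho_i$, $2\tilde\pi_\alpha\rho_i$, and $2\rho_i\sqrt{Q_{i,n_i\alpha}}$, for a total of $\rho_i(4\tilde\pi_i+2\tilde\pi_\alpha+2\sqrt{Q_{i,n_i\alpha}})$; you instead write the same quantity as $\sum_i(\alpha_i/n_i)(\pi_i^{-1}P_i-\hat P)w_i$ and invoke the projector-difference estimate $\norm{(aa^\top-bb^\top)w}\le 2\norm{a-b}\norm{w}$, arriving at $\rho_i(3\tilde\pi_i+2\tilde\pi_\alpha+2\sqrt{Q_{i,n_i\alpha}})$ --- marginally sharper and still dominated by the stated last term of~\eqref{eq:error:alpha}. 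Both organizations are valid; the paper's term-by-term version has the practical advantage that the pieces $\zeta_{i,1},\zeta_{i,3},\dotsc,\zeta_{i,6}$ are reused verbatim in the proof of Theorem~\ref{thm:nec}, where only the bound on $\sum_i\alpha_i\zeta_{i,2}$ is improved under the eigenvector-centrality weight, but your projector formulation would adapt there just as easily since your $T_1$/$T_2$ split isolates the same term.
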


\begin{proof}
\add{See \ref{sec:prthmstatic}.}
\end{proof}

\begin{remark}
\del{The condition}\add{Condition} \eqref{eq:empty} in Theorem~\ref{thm:static} is motivated and supported by our empirical finding that the repulsion term $R_i(k)$ defined in \eqref{eq:def:R_i} does not become active frequently. \del{Likewise,}\add{Similarly,} the second condition~\eqref{eq:OIneq0} is not very restrictive when a given group of fish form one school. 
\end{remark}

\del{\del{Theorem~\ref{thm:static} gives an upper bound on the error arising from the approximation }
\del{\begin{equation*}
\del{\langle \phi(D(k)) \rangle_\alpha \approx \norm{\langle V(k)\rangle_\alpha}\,\phi\bigl(\langle V(k)\rangle_\alpha + A(k) + u(k)\bigr),}
\end{equation*}}
\del{which motivates us to construct our reduced-order model for $\Sigma_{\orig}$ in Section~\ref{subsec:staticPrediction}.}}

Before presenting the proof of the theorem, we \del{here }briefly provide an interpretation of the bound. The bound suggests the following {two} \del{situations}\add{scenarios} in which the approximation~\eqref{eq:approx} is precise: 
\begin{itemize}
    \item \del{the}\add{The} force of attraction $\eta A_i$ \del{is}{and the stimulus $\xi_i u_i$ are} small\del{;} 
    \item \del{the}\add{The} direction of individuals are aligned\del{.} 
\end{itemize}
{Specifically, the following corollary illustrates which terms in the bound are critical in these two situations.
\begin{corollary}\label{cor:static}
Let $\alpha_1$, \dots, $\alpha_N$ be nonnegative constants satisfying \eqref{eq:sumtoone}. Assume that inequalities~\eqref{eq:empty} and~\eqref{eq:OIneq0} hold. If $A_i=u_i=0$ for all $i\in \{1, \dotsc, N\}$, then 
\begin{equation}\label{eq:stat:au0}
    \norm{\epsilon_\alpha} 
    \leq 
    \sum_{i=1}^N
    \alpha_i\left(\tilde \pi_i + \sqrt{Q_{i, n_i \alpha}}\right). 
\end{equation}
On the other hand, if $\theta_{ij} = 0$ for all $i, j\in\{1, \dotsc, N\}$, then 
\begin{equation}\label{eq:stat:theta0ineq}
    \norm{\epsilon_\alpha} 
    \leq 
    \del{C_\phi}\add{2}\left(
    \norm{A+u}^2 + \sum_{i=1}^N
    \alpha_i\norm{\eta A_i+\xi_i u_i}^2
    \right).
\end{equation}
\end{corollary}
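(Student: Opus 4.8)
The plan is to obtain both inequalities directly from the bound~\eqref{eq:error:alpha} of Theorem~\ref{thm:static} by specializing to the two respective hypotheses and checking which of its three groups of terms survive; no independent argument is needed.

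For the first case I would substitute $A_i = u_i = 0$. This immediately makes the aggregated quantities $A$ and $u$ of~\eqref{eq:def} vanish, so the first term $2\norm{A+u}^2/\norm{\langle V\rangle_\alpha}$ in~\eqref{eq:error:alpha} is zero. Moreover $\rho_i = \norm{\eta A_i + \xi_i u_i}/\norm{O_i} = 0$ for every $i$, which annihilates the entire third summation. Only the middle term $\sum_{i} \alpha_i(\tilde\pi_i + \sqrt{Q_{i, n_i \alpha}})$ remains, yielding~\eqref{eq:stat:au0}.

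For the second case, the hypothesis $\theta_{ij}=0$ for all $i,j$ forces all direction vectors to coincide, since they are unit vectors with pairwise zero angle; call the common value $V$. From this I would extract three consequences. First, every $\theta_{jj'}=0$, so $Q_{i, n_i \alpha} = 0$. Second, $O_i = \sum_{j\in\mathcal N_{o,i}} V_j$ is a sum of $n_i$ copies of $V$, hence $\norm{O_i}=n_i$, $\pi_i = \norm{O_i}/n_i = 1$, and $\tilde\pi_i = 0$. Third, using $\sum_i \alpha_i = 1$ from~\eqref{eq:sumtoone}, the weighted average satisfies $\langle V\rangle_\alpha = V$, so $\norm{\langle V\rangle_\alpha}=1$ (in particular the theorem's nonvanishing hypothesis holds) and $\tilde\pi_\alpha = 0$. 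Substituting these values into~\eqref{eq:error:alpha} collapses the bound to $2\norm{A+u}^2 + 2\sum_i \alpha_i \rho_i^2$.

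The only genuinely non-trivial step is then to replace $\rho_i^2$ by $\norm{\eta A_i + \xi_i u_i}^2$. Since $\norm{O_i}=n_i$ here, we have $\rho_i^2 = \norm{\eta A_i + \xi_i u_i}^2/n_i^2$, and because $n_i$ is a positive \emph{integer}, assumption~\eqref{eq:OIneq0} gives $n_i \geq 1$, so $1/n_i^2 \leq 1$ and thus $\rho_i^2 \leq \norm{\eta A_i + \xi_i u_i}^2$. Summing against the weights $\alpha_i$ and combining with the surviving first term produces~\eqref{eq:stat:theta0ineq}. I do not expect a real obstacle; the single point requiring care is precisely this integrality of $n_i$, which is what lets the aligned case come out as a clean quadratic bound rather than retaining an $n_i^{-2}$ factor.
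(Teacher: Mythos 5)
Your proof is correct and is exactly the intended argument: the corollary is a direct specialization of the bound~\eqref{eq:error:alpha} in Theorem~\ref{thm:static}, with the first case killing the $\rho_i$- and $\norm{A+u}$-terms and the second case killing $Q_{i,n_i\alpha}$, $\tilde\pi_i$, and $\tilde\pi_\alpha$ via $V_1=\cdots=V_N$. Your observation that $\norm{O_i}=n_i\geq 1$ lets $\rho_i^2$ be replaced by $\norm{\eta A_i+\xi_i u_i}^2$ is the right (and only nontrivial) step needed to match the stated form of~\eqref{eq:stat:theta0ineq}.
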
}

\del{
    \subsection{
        \del{Proof of Theorem~\mbox{\ref{thm:static}}}}
    \addtocounter{subsection}{-1}
}

\del{We present the proof of Theorem~\ref{thm:static}. For ease of the presentation of the proof, we omit writing the time variable~$k$ if no confusion can be expected. We start by proving the following technical lemma.}
\del{\begin{lemma}
\label{old:lem:deltaOi}
\del{For all $i\in \{1, \dotsc, N\}$, the vector 
    $\delta_\alpha O_i = O_i - n_i \langle V\rangle _\alpha$
satisfies}
\begin{equation*}
    \del{\norm{\delta_\alpha O_i} \leq n_i \sqrt{Q_{i, n_i \alpha}}.}
\end{equation*}
\end{lemma}}

\del{\begin{proof}
\del{For all $i, j \in \{1, \dotsc, N\}$, define $\sigma_{ij} = 1 - V_i^\top V_j$. The definition \eqref{eq:def:O_i} of the vector~$O_i$ shows $\delta_\alpha O_i = \sum_{j=1}^N(\omega_{ij}-n_i\alpha_j)V_j$. Therefore, we can show }
\begin{equation*}
    \del{\norm{\delta_\alpha O_i}^2 
=
    -\sum_{j=1}^N \sum_{j'=1}^N 
     (\omega_{ij}-n_i\alpha_j) (\omega_{ij'}-n_i\alpha_{j'}) \sigma_{jj'}, }
\end{equation*}
\del{where we used equality $\sum_{j=1}^N (\omega_{ij} - n_i \alpha_j) = 0$. Therefore, equation~\eqref{old:eq:thisequation} and the trivial inequality $\sigma_{ij} \leq \theta_{ij}^2/2$ prove $\norm{\delta_\alpha O_i}^2 
     \leq 
     \sum_{j=1}^N \sum_{j'=1}^N \abs{\omega_{ij'}-n_i\alpha_{j'}}
    \abs{\omega_{ij}-n_i\alpha_j}\theta_{jj'}^2/2
    =n_i^2 Q_{i, n_i\alpha}$, 
as desired.}
\end{proof}}

\del{Using Lemma~\ref{old:lem:deltaOi}, we can prove the following preliminary error-bound for proving the theorem.}

\del{\begin{proposition}
\del{Let $i \in \{1, \dotsc, N\}$. {Assume that \eqref{eq:OIneq0} holds.} Define}
\begin{equation*}
    \del{w_i = \eta A_i+\xi_i u_i.}
\end{equation*}
\del{If $\langle V \rangle_\alpha \neq 0$, then the norm of the vector}
\begin{equation*}
\del{\zeta_i
= 
\phi(D_i) 
- 
\left(
\langle V\rangle_\alpha
+ 
\frac{w_i}{n_i}
-
\frac{w_i^\top \langle V\rangle_\alpha}{n_i\norm{\langle V\rangle_\alpha}^2}\langle V\rangle_\alpha
\right)}
\end{equation*}
\del{satisfies }
\begin{equation*}
\begin{multlined}[.85\linewidth]
\del{\norm{\zeta_i}
\leq 2\tilde \pi_\alpha \rho_i + C_\phi \rho_i^2 + \tilde \pi_i(1+4\rho_i) + (1 + 2\rho_i)\sqrt{Q_{i, n_i \alpha}}.}
\end{multlined}
\end{equation*}
\end{proposition}}

\del{\begin{proof}
\del{Because $D_i = O_i + w_i$, we can decompose the vector~$\zeta_i$ as $\zeta_i = \zeta_{i, 1} + \zeta_{i, 2} + \zeta_{i, 3} + \zeta_{i, 4} + \zeta_{i, 5}+ \zeta_{i, 6}$, where }
\begin{equation*}
    \begin{aligned}
    \del{\zeta_{i, 1} }&\del{= \delta\phi(O_i, w_i), }
    \\
    \del{\zeta_{i, 2}} &\del{= \phi(O_i) - \langle V \rangle_\alpha, }
    \\
    \del{\zeta_{i, 3} }&\del{= \bigl({\norm{O_i}^{-1}-n_i^{-1}}\bigr) w_i, }
    \\
    \del{\zeta_{i, 4} }&\del{= - \bigl(\norm{O_i}^{-3}-n_i^{-3})(w_i^\top O_i\bigr) O_i, }
    \\
    \del{\zeta_{i, 5}} &\del{= \frac{w_i^\top \langle V\rangle_\alpha}{n_i\norm{\langle V\rangle_\alpha}^2}\langle V\rangle_\alpha
    - 
    \frac{w_i^\top \langle V\rangle_\alpha}{n_i}\langle V\rangle_\alpha, }
    \\
    \del{\zeta_{i, 6}} &\del{= \frac{w_i^\top \langle V\rangle_\alpha}{n_i}\langle V\rangle_\alpha
    - 
    \frac{w_i^\top O_i}{n_i^3}O_i.}
    \end{aligned}
\end{equation*}
\del{Let us evaluate the norm of each of these vectors. Lemma~\ref{old:lem:2} shows}
\begin{equation*}
\begin{aligned}
    \del{\norm{\zeta_{i, 1}} }
    &\del{\leq C_\phi \norm{O_i}^{-2} \norm{w_i}^2
    =C_\phi \rho_i^2.}
    \end{aligned}
\end{equation*}
\del{Because {$\zeta_{i, 2} = (\norm{O_i}^{-1}-n_i^{-1})O_i - n_i^{-1}\delta_\alpha O_i$}, inequality \mbox{\eqref{old:eq:deltaOiineq}} shows}
\begin{equation*}
    \begin{aligned}
    \del{\norm{\zeta_{i, 2}} }
    &\del{\leq 
    (\norm{O_i}^{-1}-n_i^{-1})\norm{O_i} + \sqrt{Q_{i, n_i \alpha}}}
    \\
    &\del{=
    \tilde \pi_i + \sqrt{Q_{i, n_i \alpha}}.}
    \end{aligned}
\end{equation*}
\del{Similarly, we can show }
\begin{equation*}
\begin{aligned}
    \del{\norm{\zeta_{i, 3}} }
    &\del{\leq ({\norm{O_i}^{-1}-n_i^{-1}})\norm{w_i}
    = \tilde \pi_i \rho_i, }
    \end{aligned}
\end{equation*}
\del{and}
\begin{equation*}
\begin{aligned}
    \del{\norm{\zeta_{i, 4}} }
    &\del{\leq
    (\norm{O_i}^{-3}-n_i^{-3})\norm{w_i} \norm{O_i}^2}
    \\
    &\del{=
    \rho_i (1-\pi_i^3)}
    \\
    &\del{\leq 
    3 \tilde \pi_i \rho_i.}
\end{aligned}
\end{equation*}
\del{Furthermore, it follows that  }
\begin{equation*}
\begin{aligned}
    \del{\norm{\zeta_{i, 5}}}
    &\del{\leq
    (1-\norm{\langle V\rangle_\alpha}^2)\norm{w_i}n_i^{-1}}
    \\
    &\del{\leq 2\tilde \pi_\alpha \rho_i.}
\end{aligned}
\end{equation*}
\del{Also, by using the triangle inequality and inequality~\eqref{old:eq:deltaOiineq}, we can show }
\begin{equation*}
\begin{aligned}
    \del{\norm{\zeta_{i, 6}} }
    &\del{\leq  2\norm{w_i}n_i^{-1} \sqrt{Q_{i, n_i \alpha}}}
    \\
    &\del{\leq 
    2\rho_i \sqrt{Q_{i, n_i \alpha}}.}
\end{aligned}
\end{equation*}
\del{Finally, inequalities
\mbox{\cref{old:eq:zeta1,old:eq:zeta2,old:eq:zeta3,old:eq:zeta4,old:eq:zeta5,old:eq:zeta6}}
complete the proof.}
\end{proof}}

\del{We are now ready to prove the main result of this section.}
\del{\begin{proof}
\del{Let us write $w = A+u$. From the definition \eqref{old:eq:tildeD} of the vector~$w_i$, we have $\sum_{i=1}^N ({\alpha_i}/{n_i}) w_i = w$. Therefore, taking the weighted summation with respect to $i$ in equation~\eqref{old:eq:zetai} shows} 
\begin{equation*}
\begin{aligned} 
   \del{\langle \zeta\rangle_\alpha}
    &\del{=}
    \begin{multlined}[t]
        \del{\langle \phi(D)\rangle_\alpha 
    -
    \left(
    \langle V\rangle_\alpha + w - \frac{w^\top\langle V\rangle_\alpha}{\norm{\langle V\rangle_\alpha}^2}\langle V\rangle_\alpha
    \right)}
    \end{multlined}
    \\
    &\del{=}
    \begin{multlined}[t]
    \del{\langle \phi(D)\rangle_\alpha -
    \norm{\langle V\rangle_\alpha}\,\phi\bigl(\langle V\rangle_\alpha + w\bigr) + \norm{\langle V\rangle_\alpha}\, \delta\phi\bigl(\langle V\rangle_\alpha, w\bigr), }
    \end{multlined}
\end{aligned}
\end{equation*}
\del{where in the last equation we used \eqref{old:eq:deltaphi}. Therefore,}
\begin{equation*}
    \del{\epsilon_\alpha = \langle \zeta\rangle_\alpha-
    \norm{\langle V\rangle_\alpha}\,\delta\phi\bigl(\langle V\rangle_\alpha, w\bigr).}
\end{equation*}
\del{Hence, the triangle inequality as well as 
Proposition~\ref{old:prop:zeta} and Lemma~\ref{old:lem:2} complete the proof of inequality~\eqref{eq:error:alpha}.}
\end{proof}}

\del{
\setcounter{subsection}{2}
\subsection{\del{Prediction Model}}
\setcounter{subsection}{1}
\addtocounter{subsubsection}{1}
}

\subsubsection{\add{Prediction Model}}\label{subsec:staticPrediction}

 
{Our discussion at the beginning of this section and} Theorem~\ref{thm:static} \del{suggests}{suggest} the approximation 
\begin{equation}\label{eq:suggestedApproximation}
     \langle V(k+1) \rangle_{\del{\alpha}{1/N}} \approx \norm{\langle V(k)\rangle_{\del{\alpha}{1/N}}}\,\phi\bigl(\langle V(k)\rangle_{\del{\alpha}{1/N}} + A(k) + u(k)\bigr). 
\end{equation}
This approximation motivates us to construct the following lower-order prediction model, which we denote by $\Sigma_{\stc}$. In this model, at an observation time~$k = \ell T$, the estimated center of mass $\hat c$ is initialized as 
\begin{equation}\label{eq:hatxinitialize}
    \hat c(\ell T) = \frac{1}{N}\sum_{i=1}^N x_i(\ell T), 
\end{equation}
and \add{it} is dynamically updated by the difference equation 
\begin{equation}\label{eq:hatxupdate}
    \hat c(k+1) = \hat c(k) + \tau v \hat V(k)\add{,}
\end{equation}
where $\hat V$ represents the estimated direction of the center of mass\del{,} and is dynamically updated as in \eqref{eq:suggestedApproximation} by\del{ the equation}
\begin{equation}\label{eq:isreferred_hatvupdate}
\begin{multlined}[.85\linewidth]
    \hat V (k+1) =
    \norm{\hat V(k)}\,\phi\bigg( \hat V(k) +  \sum_{i=1}^N \frac{\del{N^{-1}}{1/N}}{n_i(\ell T)} \eta A_i(\ell T) +  \sum_{i=1}^N \frac{\del{N^{-1}}{1/N} }{n_i(\ell T)}\xi_i u_i(k) \bigg)
    \end{multlined}
\end{equation}
for $k \geq \ell T$ under the initialization 
\begin{equation}\label{eq:initialCond}
    \hat V(\ell  T) = \frac{1}{N}\sum_{i=1}^N V_i(\ell  T)
\end{equation}
at time $k=\ell T$. 
{Therefore, the optimization problem to be solved during the continuous-time interval $[\ell T \tau, (\ell+1) T \tau)$ \del{for determining}\add{to determine} the stimulus to be used in the next interval $[(\ell+1) T \tau, (\ell+2) T \tau)$ is}
\begin{equation}\label{eq:staticWeightMPCOptimization}{
    \begin{aligned}
        \minimize_{\{u(k)\}_{k=(\ell +1)T}^{(\ell +1)T+T_h-1}}& 
    \sum_{k=(\ell+1)T}^{(\ell+1)T+T_h}\dist(\hat c(k), \mathcal R)
\\
\subjectto & \mbox{\cref{eq:hatxinitialize,eq:hatxupdate,eq:isreferred_hatvupdate,eq:initialCond}.}
    \end{aligned}}
\end{equation}
\add{A schematic of the control algorithm is illustrated in Fig.~\ref{fig:MPCstc}.} \label{whatquantities}\add{The quantities {required} to solve this optimization problem are the observations~$\{x_i(\ell T), V_i(\ell T)\}_{i=1}^N$ and the parameters~$\psi$, $\eta$, $\xi_i$, $N$, $\tau$, $v$, $r_r$, $r_o$, and~$r_a$. If the exact values of the parameters are not available, they can be replaced with estimates.}

\begin{figure}[tb]
    \centering
    \add{\includegraphics[width=.75\linewidth]{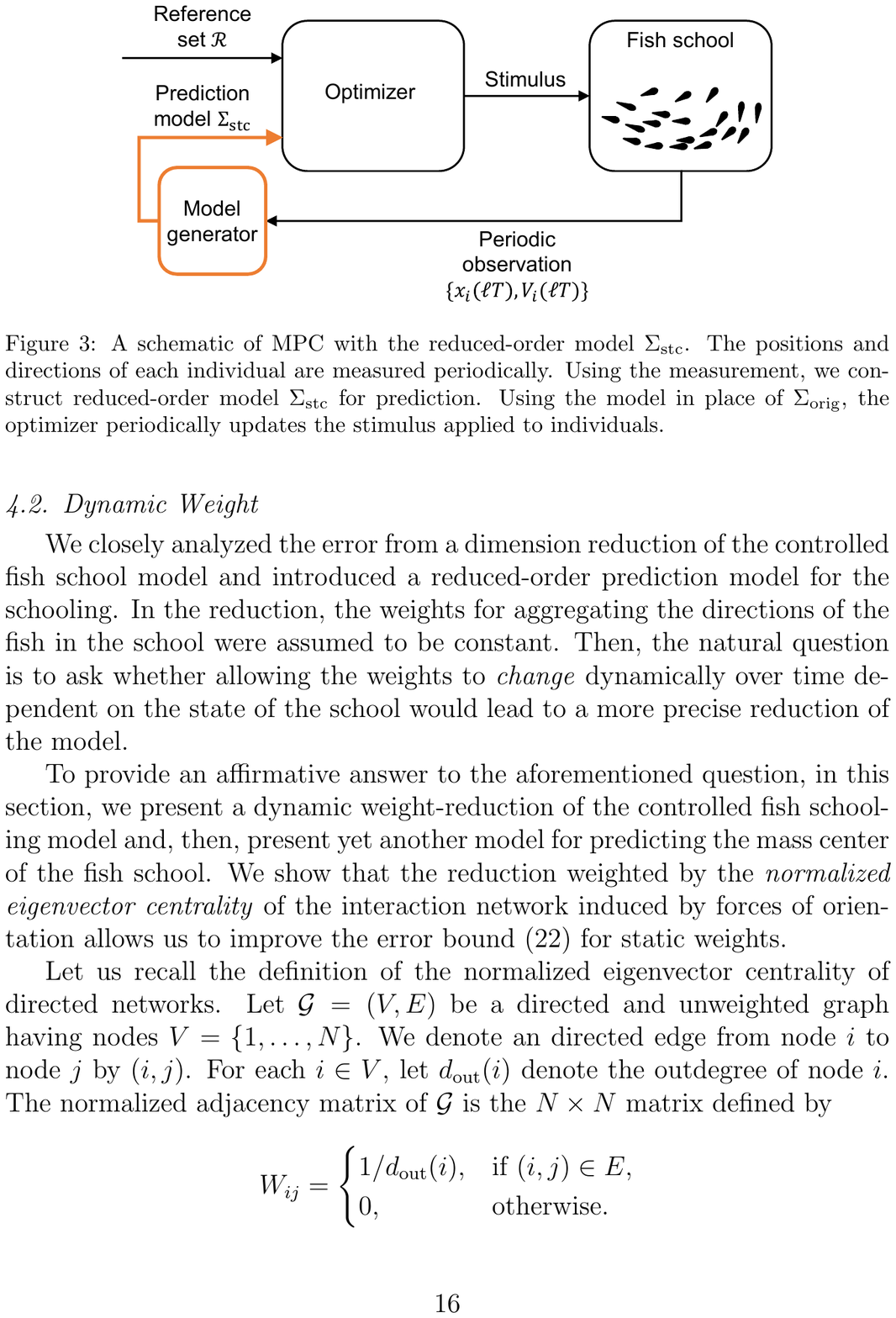}}
    \caption{\add{A schematic of MPC with the reduced-order model~$\Sigma_{\stc}$. The positions and directions of each individual are measured periodically. Using the measurement, we construct reduced-order model~$\Sigma_{\stc}$ for prediction. Using the model in place of~$\Sigma_{\orig}$, the  optimizer periodically updates the stimulus applied to individuals.%
    }}
    \label{fig:MPCstc}
\end{figure}


\del{
\section{\del{Dynamic weight}} 
\addtocounter{section}{-1}
\setcounter{subsection}{1}
}
\subsection{\add{Dynamic Weight}}
\label{sec:reduction:dynamic}

\del{In the last section, we have}\add{We} closely analyzed the error from a dimension reduction of the controlled fish school model\del{,} and \del{then }introduced a reduced-order prediction model for the schooling. In the reduction, the weights for aggregating the \del{fish's }directions \add{of the fish} in the school were assumed to be constant. Then, \del{a}\add{the} natural question is to ask whether allowing the weights to \emph{\del{dynamically }change} \add{dynamically} over time dependent on the state of the school would lead to a more precise reduction of the model.

To provide an affirmative answer to the aforementioned question, in this section, we present a dynamic weight-reduction of the controlled fish schooling model and, then, present yet another model for predicting the mass center of the fish school. We \del{specifically }show that the reduction weighted by the \emph{normalized eigenvector centrality} of the interaction network induced by forces of orientation \del{is potentially more accurate than the reduction with static weights}{allows us to improve the error bound~\eqref{eq:error:alpha} for \del{the case of }static weights}.

\del{To further proceed, let}\add{Let} us recall the definition of \add{the} normalized eigenvector centrality of directed networks. Let $\mathcal G = (V, E)$ be a directed and unweighted graph having nodes $ V = \{1, \dotsc, N\}$. We denote \del{by $(i, j)$ }an directed edge from \del{the }node $i$ to \del{the }node $j$ \add{by $(i, j)$}. For each $i \in V$, let $d_{\out}(i)$ denote the outdegree of \del{the }node $i$. The normalized adjacency matrix of~$\mathcal G$ is the $N\times N$ matrix defined by 
\begin{equation*}
    W_{ij} = 
    \begin{cases}
    {1}/{d_{\out}(i)},&\mbox{if $(i, j) \in E$,}
    \\
    0,&\mbox{otherwise.}
    \end{cases}
\end{equation*}
If $\mathcal G$ is strongly connected, \del{then }the normalized eigenvector centrality of~$\mathcal G$ is defined as the positive vector~$\beta = [\beta_1\ \cdots\ \beta_N]^\top$ satisfying $W^\top \beta = \beta$ and~$\beta_1 + \cdots + \beta_N = 1$. 

\del{
\setcounter{section}{5}
\setcounter{subsection}{0}
\subsection{\del{Reduction Error Analysis}}
\addtocounter{subsection}{-1}
\setcounter{section}{4}
\setcounter{subsection}{2}
}
\subsubsection{\add{Reduction Error Analysis}}

{\del{Before presenting our main result of this section, we give}\add{We first provide} a brief discussion \add{to demonstrate the need for using} \del{for motivating the use of }the normalized eigenvector centrality of the interaction network induced by forces of orientation. \del{For this purpose, we consider the situation}\add{We consider the scenario} where the orientation force $O_i(k)$ is a dominant term in the desired direction~$D_i(k)$. In this \del{situation}\add{scenario}, we can invoke the rough approximation $\norm{D_i(k)} \approx n_i(k)$, which leads to}
\begin{equation}\label{eq:Diapprox}{
    D_i(k) \approx \frac{1}{n_i}(O_i + \eta A_i + \xi_i w_i).}
\end{equation}

\begin{figure}[tb]
    \centering
    \add{\includegraphics[width=.4\linewidth]{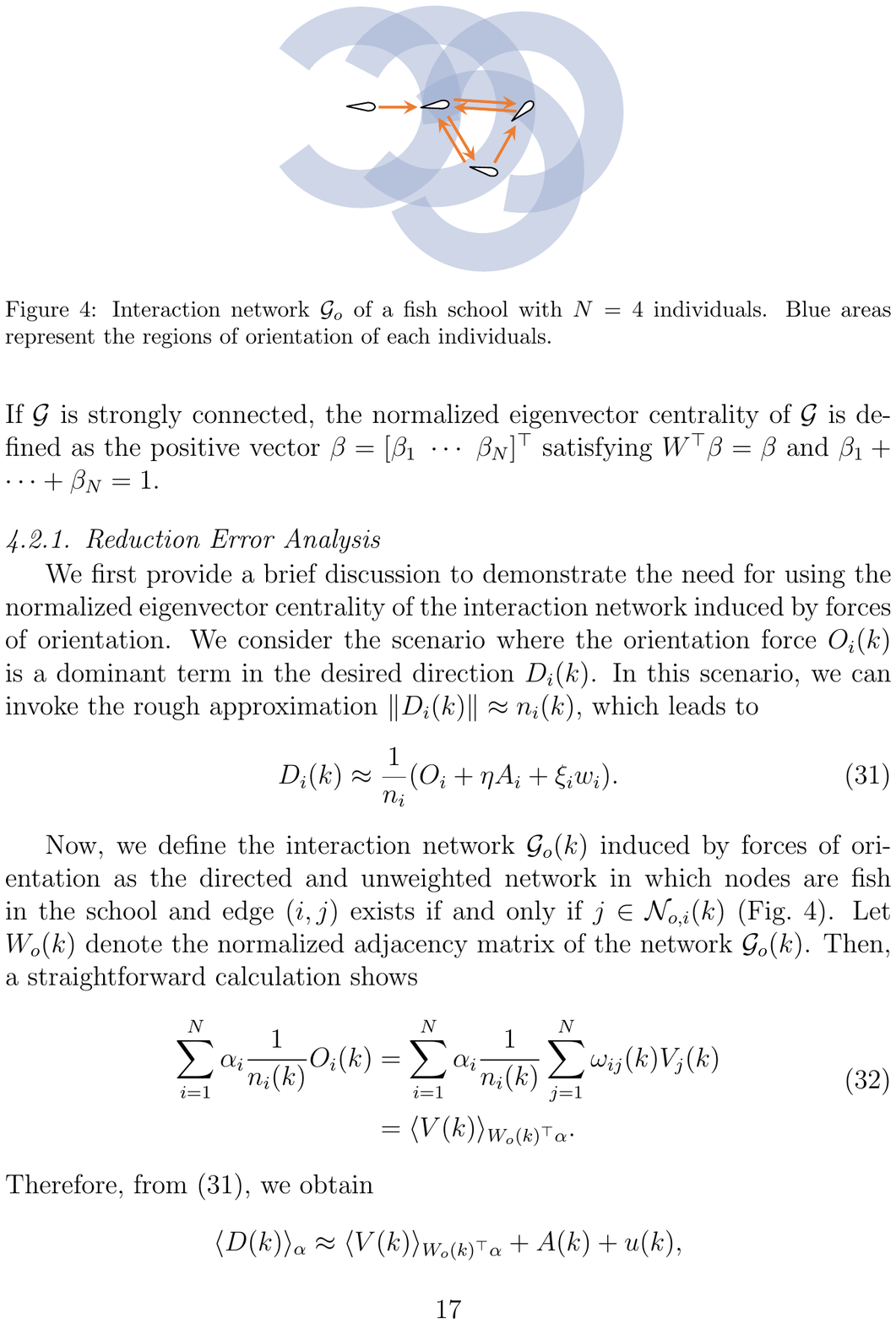}}
    \caption{\add{Interaction network $\mathcal G_o$ of a fish school with $N=4$ individuals. Blue areas represent the regions of orientation of each individuals.}}
    \label{fig:intnet}
\end{figure}

Now, \del{let us}\add{we} define the interaction network $\mathcal G_o(k)$ induced by forces of orientation as the directed and unweighted network  in which nodes are fish in the school and edge $(i, j)$ exists if and only if $j \in \mathcal N_{o, i}(k)$ \add{(Fig.~\ref{fig:intnet})}. Let $W_o(k)$ denote the normalized adjacency matrix of the network $\mathcal G_o(k)$. Then, \del{we have}\add{a straightforward calculation shows}
\begin{equation}\label{eq:intuition}
    \begin{aligned}
    \sum_{i=1}^N \alpha_i \frac{1}{n_i(k)} O_i(k)
    &=
    \sum_{i=1}^N \alpha_i \frac{1}{n_i(k)}\sum_{j=1}^N \omega_{ij}(k) V_j(k)
    \\
    &=
    \langle V(k)\rangle_{W_o(k)^\top\alpha}. 
    \end{aligned}
\end{equation}
Therefore, from \eqref{eq:Diapprox}\add{,} we obtain 
\begin{equation*}
    \langle D(k)\rangle_\alpha \approx \langle V(k)\rangle_{W_o(k)^\top \alpha} 
    + A(k) + u(k), 
\end{equation*}
where $A(k)$ and~$u(k)$ are defined in \eqref{eq:def}. This approximate relationship contains a potential inconsistency in the weights on the desired direction~$D$ and the direction $V$. The inconsistency can be resolved if \del{the }weight~$\alpha$ satisfies $\alpha = W_o(k)^\top \alpha$, i.e., if $\alpha$ equals the normalized eigenvector centrality of \del{the }network~$\mathcal G_o(k)$. This discussion suggests that the aggregation of \del{the }direction~$V$ by the normalized eigenvector centrality \del{could}\add{can} allow us to construct a reduced-order model with a higher accuracy.

The following theorem provides a theoretical confirmation on the potential superiority of \del{the }model reduction with the normalized eigenvector centrality over the one with a static weight. The dependence of relevant variables on the time index~$k$ is omitted for brevity\add{,} as in Theorem~\ref{thm:static}.  

\begin{theorem}\label{thm:nec}
Assume that \eqref{eq:empty} and \eqref{eq:OIneq0} hold true for all $i=1, \dotsc, N$. \del{If}{Suppose that} $\alpha$ equals the normalized eigenvector centrality of~$\mathcal G_o${. If $\langle V \rangle_\alpha \neq 0$}, then the vector~$\epsilon_\alpha$ defined in~\eqref{eq:eps} satisfies 
\begin{equation}\label{eq:alphastarineq}
\begin{multlined}[.85\linewidth]
    \norm{\epsilon_\alpha} 
    \leq 
    \del{C_\phi}\add{2}
    \frac{\norm{A+u}^2}{\norm{\langle V\rangle_\alpha}}+
    \sum_{i=1}^N
    \alpha_i\tilde \pi_i 
    +
    \add{2}\sum_{i=1}^N
    \alpha_i\rho_i\left(\del{2}\tilde \pi_\alpha + \del{C_\phi} \rho_i +\del{4}\add{2} \tilde \pi_i + \del{2} \sqrt{Q_{i, n_i \alpha}}\right).
    \end{multlined}
\end{equation}
\end{theorem}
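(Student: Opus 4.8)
The plan is to reuse, almost verbatim, the error decomposition developed for Theorem~\ref{thm:static} and to show that replacing a generic weight by the normalized eigenvector centrality makes exactly one contribution collapse, namely the one responsible for the non-$\rho_i$ occurrence of $\sqrt{Q_{i,n_i\alpha}}$ in \eqref{eq:error:alpha}. Recall that the proof of Theorem~\ref{thm:static} writes $\epsilon_\alpha = \langle\zeta\rangle_\alpha - \norm{\langle V\rangle_\alpha}\,\delta\phi(\langle V\rangle_\alpha, A+u)$ and splits each per-fish remainder $\zeta_i$ into six pieces $\zeta_{i,1},\dots,\zeta_{i,6}$. The only piece whose individual bound carries the $\rho_i$-free factor $\sqrt{Q_{i,n_i\alpha}}$ is the alignment piece $\zeta_{i,2} = \phi(O_i) - \langle V\rangle_\alpha$; every other piece is weighted by $\rho_i$ and, when summed against $\alpha$, reproduces the term $2\sum_{i=1}^N\alpha_i\rho_i(\tilde\pi_\alpha + \rho_i + 2\tilde\pi_i + \sqrt{Q_{i,n_i\alpha}})$, while the $\delta\phi$ remainder reproduces $2\norm{A+u}^2/\norm{\langle V\rangle_\alpha}$, exactly as before. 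The entire task therefore reduces to re-estimating the aggregate $\sum_{i=1}^N\alpha_i\zeta_{i,2}$ under the eigenvector-centrality hypothesis.

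First I would split the alignment piece as
\begin{equation*}
\zeta_{i,2} = \phi(O_i) - \langle V\rangle_\alpha = \bigl(\norm{O_i}^{-1} - n_i^{-1}\bigr)O_i + \frac{1}{n_i}\,\delta_\alpha O_i,
\end{equation*}
where $\delta_\alpha O_i = O_i - n_i\langle V\rangle_\alpha$ is the same auxiliary vector used in the static proof. Summing against $\alpha$ and using $\sum_{i=1}^N\alpha_i = 1$ gives
\begin{equation*}
\sum_{i=1}^N \alpha_i \zeta_{i,2} = \sum_{i=1}^N \alpha_i\bigl(\norm{O_i}^{-1} - n_i^{-1}\bigr)O_i + \left(\sum_{i=1}^N\frac{\alpha_i}{n_i}O_i - \langle V\rangle_\alpha\right).
\end{equation*}
The crucial step is that the parenthesized term vanishes: by \eqref{eq:intuition} we have $\sum_{i=1}^N (\alpha_i/n_i)O_i = \langle V\rangle_{W_o^\top\alpha}$, and since $\alpha$ is the normalized eigenvector centrality of $\mathcal G_o$ we have $W_o^\top\alpha = \alpha$, hence $\langle V\rangle_{W_o^\top\alpha} = \langle V\rangle_\alpha$. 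Thus $\sum_{i=1}^N(\alpha_i/n_i)\delta_\alpha O_i = 0$, and the triangle inequality leaves only
\begin{equation*}
\Norm{\sum_{i=1}^N\alpha_i\zeta_{i,2}} \leq \sum_{i=1}^N\alpha_i\bigl(\norm{O_i}^{-1} - n_i^{-1}\bigr)\norm{O_i} = \sum_{i=1}^N\alpha_i\tilde\pi_i,
\end{equation*}
with no $\sqrt{Q_{i,n_i\alpha}}$ contribution. Combining this improved estimate with the unchanged bounds on the remaining five pieces and on the $\delta\phi$ remainder, again via the triangle inequality, then yields \eqref{eq:alphastarineq}.

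I expect the main obstacle to be organizational rather than computational: the improvement is invisible if one merely quotes the per-fish bound $\norm{\zeta_{i,2}} \leq \tilde\pi_i + \sqrt{Q_{i,n_i\alpha}}$ from the static proof, because the cancellation is a property of the $\alpha$-weighted sum and not of the individual summands. The proof must therefore reach inside the decomposition of $\zeta_i$ and aggregate $\zeta_{i,2}$ \emph{before} taking norms, rather than treating $\norm{\zeta_i}$ as a black box. One must also check that the eigenvector identity does not alter the $\rho_i$-weighted pieces $\zeta_{i,1},\zeta_{i,3},\dots,\zeta_{i,6}$; it does not, since each of those retains the factor $w_i = \eta A_i + \xi_i u_i$ that obstructs the same cancellation, so the corresponding terms of \eqref{eq:error:alpha} transfer unchanged into \eqref{eq:alphastarineq}.
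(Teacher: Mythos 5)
Your proposal is correct and follows essentially the same route as the paper's proof: both keep the six-piece decomposition of $\zeta_i$ from the static case, bound $\omega_1,\omega_3,\dots,\omega_6$ and the $\delta\phi$ remainder exactly as before, and re-estimate only the aggregate $\omega_2=\sum_i\alpha_i\zeta_{i,2}$ by using \eqref{eq:intuition} together with $W_o^\top\alpha=\alpha$ to reduce it to $\sum_i\alpha_i(\phi(O_i)-n_i^{-1}O_i)$, whose norm is at most $\sum_i\alpha_i\tilde\pi_i$. Your explicit identification of the cancelling term as $\sum_i(\alpha_i/n_i)\,\delta_\alpha O_i=0$ is just a slightly more spelled-out version of the same step.
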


\begin{proof}
\add{See \ref{sec:pfthmnec2}.}
\end{proof}

\del{Notice that the}\add{The} bound of the reduction error in \eqref{eq:alphastarineq} does not have the term $\sum_{i=1}^N \alpha_i \sqrt{Q_{i, n_i\alpha}}$, which is \del{one of the}\add{a} major \del{terms}\add{term} in \del{the }\del{estimate~\eqref{eq:error:alpha}}{estimates~\eqref{eq:error:alpha} and~\eqref{eq:stat:au0}} from using static weights. This fact suggests that using the normalized eigenvector centrality as the weight for reduction \del{could}\add{can} lead to a more precise prediction of \del{the }fish schooling. 

{The following corollary is the dynamic-weight counterpart of Corollary~\ref{cor:static} and explicitly shows the improvement of \del{the }estimate~\eqref{eq:stat:au0}. 
\begin{corollary}
Assume that \eqref{eq:empty} and \eqref{eq:OIneq0} hold true for all $i=1, \dotsc, N$. Suppose that $\alpha$ equals the normalized eigenvector centrality of~$\mathcal G_o$. If $A_i=u_i=0$ for all $i\in \{1, \dotsc, N\}$, then  
\begin{equation*}
    \norm{\epsilon_\alpha} 
    \leq 
    \sum_{i=1}^N
    \alpha_i\tilde \pi_i . 
\end{equation*}
On the other hand, if $\theta_{ij} = 0$ for all $i, j\in\{1, \dotsc, N\}$, then inequality~\eqref{eq:stat:theta0ineq} holds. 
\end{corollary}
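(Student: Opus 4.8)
The plan is to derive both inequalities directly from Theorem~\ref{thm:nec} by substituting the respective hypotheses into the bound~\eqref{eq:alphastarineq}. This is possible without additional work because the standing assumptions of this corollary—namely \eqref{eq:empty}, \eqref{eq:OIneq0}, and that $\alpha$ is the normalized eigenvector centrality of~$\mathcal G_o$—are exactly those of the theorem. Thus no new machinery is introduced; the entire argument is a specialization.

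For the first claim I would set $A_i = u_i = 0$ for all~$i$. By the definitions~\eqref{eq:def}, the aggregated vectors then satisfy $A = u = 0$, so $\norm{A+u} = 0$ and the first term on the right-hand side of~\eqref{eq:alphastarineq} vanishes. Moreover $\eta A_i + \xi_i u_i = 0$ forces $\rho_i = 0$ for every~$i$, and since each summand of the $\rho_i$-weighted sum carries the factor~$\rho_i$, that entire sum collapses. What survives is precisely $\sum_{i=1}^N \alpha_i \tilde \pi_i$, which is the asserted bound. (Here $\langle V\rangle_\alpha \neq 0$ is the standing nondegeneracy condition already needed for $\epsilon_\alpha$ in~\eqref{eq:eps} to be well defined, as $\langle V\rangle_\alpha + A + u = \langle V\rangle_\alpha$ in this case.)

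For the second claim I would impose $\theta_{ij} = 0$ for all $i,j$. Since each $V_i$ is a unit vector, $\theta_{ij} = \angle(V_i, V_j) = 0$ forces $V_i = V_j$, so all directions coincide with a common unit vector~$V$. I would then record the induced simplifications: every $\theta_{jj'} = 0$ makes the quadratic function vanish, $Q_{i, n_i\alpha} = 0$; the identity $O_i = n_i V$ gives $\norm{O_i} = n_i$, hence $\pi_i = 1$ and $\tilde\pi_i = 0$; and $\langle V\rangle_\alpha = \bigl(\sum_i \alpha_i\bigr)V = V$ gives $\norm{\langle V\rangle_\alpha} = 1$ and $\tilde\pi_\alpha = 0$ (which also confirms $\langle V\rangle_\alpha \neq 0$, so Theorem~\ref{thm:nec} applies). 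Substituting these vanishing quantities into~\eqref{eq:alphastarineq} collapses the bound to $\norm{\epsilon_\alpha} \leq 2\norm{A+u}^2 + 2\sum_{i=1}^N \alpha_i \rho_i^2$, using $\norm{\langle V\rangle_\alpha} = 1$.

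The only remaining step is the elementary estimate relating $\rho_i^2$ to $\norm{\eta A_i + \xi_i u_i}^2$: since $\norm{O_i} = n_i$ and $n_i \geq 1$ (as $n_i$ is a positive integer by~\eqref{eq:OIneq0}), one has $\rho_i^2 = \norm{\eta A_i + \xi_i u_i}^2/n_i^2 \leq \norm{\eta A_i + \xi_i u_i}^2$, whence $2\sum_{i=1}^N \alpha_i \rho_i^2 \leq 2\sum_{i=1}^N \alpha_i \norm{\eta A_i + \xi_i u_i}^2$ and the bound becomes exactly~\eqref{eq:stat:theta0ineq}. I do not anticipate a genuine obstacle here, as the corollary is a pure specialization of Theorem~\ref{thm:nec}; the only point requiring care is the bookkeeping of the degenerate quantities in the perfectly aligned case and the trivial inequality $1/n_i^2 \leq 1$.
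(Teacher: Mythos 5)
Your proof is correct and is precisely the intended argument: the corollary is a direct specialization of the bound \eqref{eq:alphastarineq} in Theorem~\ref{thm:nec} (the paper gives no separate proof), and your bookkeeping of the degenerate quantities — $\norm{A+u}=0$ and $\rho_i=0$ in the first case; $Q_{i,n_i\alpha}=\tilde\pi_i=\tilde\pi_\alpha=0$, $\norm{\langle V\rangle_\alpha}=1$, and $\rho_i^2=\norm{\eta A_i+\xi_i u_i}^2/n_i^2\leq\norm{\eta A_i+\xi_i u_i}^2$ in the second — is exactly right. The only point worth noting is that you correctly observed $\langle V\rangle_\alpha\neq 0$ must be carried as a standing hypothesis in the first case but is automatic in the aligned case.
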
}

\del{Let us give the proof of Theorem~\ref{thm:nec}.}

\del{\begin{proof}[Proof of Theorem~\ref{thm:nec}] 
\del{Define the vector~$\zeta_i$ as in \eqref{eq:zetai}. Assume that $\alpha$ equals the normalized eigenvector centrality of~$\mathcal G_o(k)$. Let us decompose the vector~$\langle\zeta \rangle_\alpha = \sum_{i=1}^N \alpha_i \zeta_i$ as  $\langle\zeta \rangle_\alpha = \omega_{1} + \omega_{2} + \omega_{3} + \omega_{4} + \omega_{5} + \omega_6$, where $\omega_\ell = \sum_{i=1}^N \alpha_i \zeta_{i, \ell}$ for all $\ell=1,\dotsc, 6$. We bound the norm of the vectors $\omega_1$, $\omega_3$, $\omega_4$, $\omega_5$, and~$\omega_6$ by using inequalities \eqref{old:eq:zeta1} and \eqref{old:eq:zeta3}--\eqref{old:eq:zeta6}. Let us evaluate $\norm{\omega_2}$ in a different manner. Because $\alpha$ is assumed to be equal to the normalized eigenvector centrality of~$\mathcal G_o(k)$, from equation \eqref{eq:intuition} we can show 
that}
\begin{equation*}
\del{\omega_2 
= 
\sum_{i=1}^N \alpha_i(\phi(O_i) - \langle V \rangle_\alpha) 
=  \sum_{i=1}^N \alpha_i (\phi(O_i) - n_i^{-1} O_i).}
\end{equation*}
\del{Therefore, we obtain $\norm{\omega_2} \leq \sum_{i=1}^N \alpha_i \tilde \pi_i$. This inequality as well as inequalities \eqref{eq:old:zeta1} and  \eqref{eq:old:zeta3}--\eqref{eq:old:zeta6} show}
\begin{equation*}
\begin{multlined}[.85\linewidth]
\del{\norm{\langle \zeta\rangle_\alpha}
 \leq \sum_{i=1}^N \alpha_i\bigg(2\tilde \pi_\alpha \rho_i + C_\phi \rho_i^2 
 + \tilde \pi_i(1+4\rho_i) + 2\rho_i\sqrt{Q_{i, n_i \alpha}}\bigg).} 
 \end{multlined}
\end{equation*}
\del{Therefore, equality~\eqref{eq:old:finalepailonalpha} as well as 
Proposition~\ref{prop:zeta} and Lemma~\ref{lem:2} complete the proof of the theorem, as desired.}
\end{proof}}

\del{
\setcounter{section}{5}
\setcounter{subsection}{1}
\subsection{\del{Prediction Model}}
\setcounter{section}{4}
\setcounter{subsection}{2}
\setcounter{subsubsection}{1}
}

\subsubsection{\add{Prediction Model}}\label{subsec:dynpremod}

We present our second reduced-order model~\del{$\hat \Sigma_{\dyn}$}\add{$\Sigma_{\dyn}$} based on Theorem~\ref{thm:nec}. In this model, although the estimated center of mass~$\hat c$ is initialized and updated by equations~\eqref{eq:hatxinitialize} and~\eqref{eq:hatxupdate} as in the first model~{$\hat \Sigma_{\stc}$}, the update law for the estimated direction $\hat V$ is distinct and \del{is }given by 
\begin{equation}\label{eq:isreferred_hatvupdateddyn}
\begin{multlined}
    \hat V (k+1) = \norm{\hat V (k) }\, \phi\bigg( \hat V(k) + \sum_{i=1}^N \frac{\beta_i(\ell  T)}{n_i(\ell  T)} \eta A_i(\ell  T) + \sum_{i=1}^N \frac{\beta_i(\ell  T)}{n_i(\ell  T)}\xi_i u_i(k) \bigg) 
    \end{multlined}
\end{equation}
for $k = \ell T, \ell T+1, \dotsc$, where \del{$\beta_i(\ell T)$}\add{$\beta_1(\ell T)$, \dots, $\beta_N(\ell T)$} \del{denotes}\add{denote} the normalized eigenvector centrality of the network $\mathcal G_o(\ell T)$. \del{We remark that, in}\add{In} this construction, we implicitly utilize the approximation $\langle V \rangle_\alpha \approx \langle V \rangle_{1/N}$. Therefore, we use the same initialization~\eqref{eq:initialCond} {of~$\hat V$} as in~$\Sigma_{\stc}$. Hence, the optimization problem to be solved during the continuous-time interval $[\ell T \tau, (\ell+1) T \tau)$ for determining the stimulus to be used in the next interval~$[(\ell+1) T \tau, (\ell+2) T \tau)$ is \begin{equation}\label{eq:dynamicWeightMPCOptimization}
    \begin{aligned}
        \minimize_{\{u(k)\}_{k=(\ell +1)T}^{(\ell +1)T+T_h-1}}& 
    \sum_{k=(\ell+1)T}^{(\ell+1)T+T_h}\dist(\hat c(k), \mathcal R)
\\
\subjectto & \cref{eq:hatxinitialize,eq:hatxupdate,eq:isreferred_hatvupdateddyn,eq:initialCond}.
    \end{aligned}
\end{equation}
\label{whatquantitiesdyn}\add{Although the prediction model~$\Sigma_{\dyn}$ is more elaborated than $\Sigma_{\stc}$, the set of  quantities we require to solve this optimization problem is the same as the one for the optimization problem~\eqref{eq:staticWeightMPCOptimization}.}

\del{Because }Theorem~\ref{thm:nec} suggests {the possibility} that \del{the }model~$\Sigma_{\dyn}$ {achieves} smaller prediction errors compared with the static weight-based model~$\Sigma_{\stc}$\del{,} \add{because} we can expect that the MPC with the model~$\Sigma_{\dyn}$ can be better than the one with~$\Sigma_{\stc}$. We \del{affirmatively }confirm the superiority of the prediction model~$\Sigma_{\dyn}$ in the next section. 


\begin{figure*}[bt]
\centering
\includegraphics{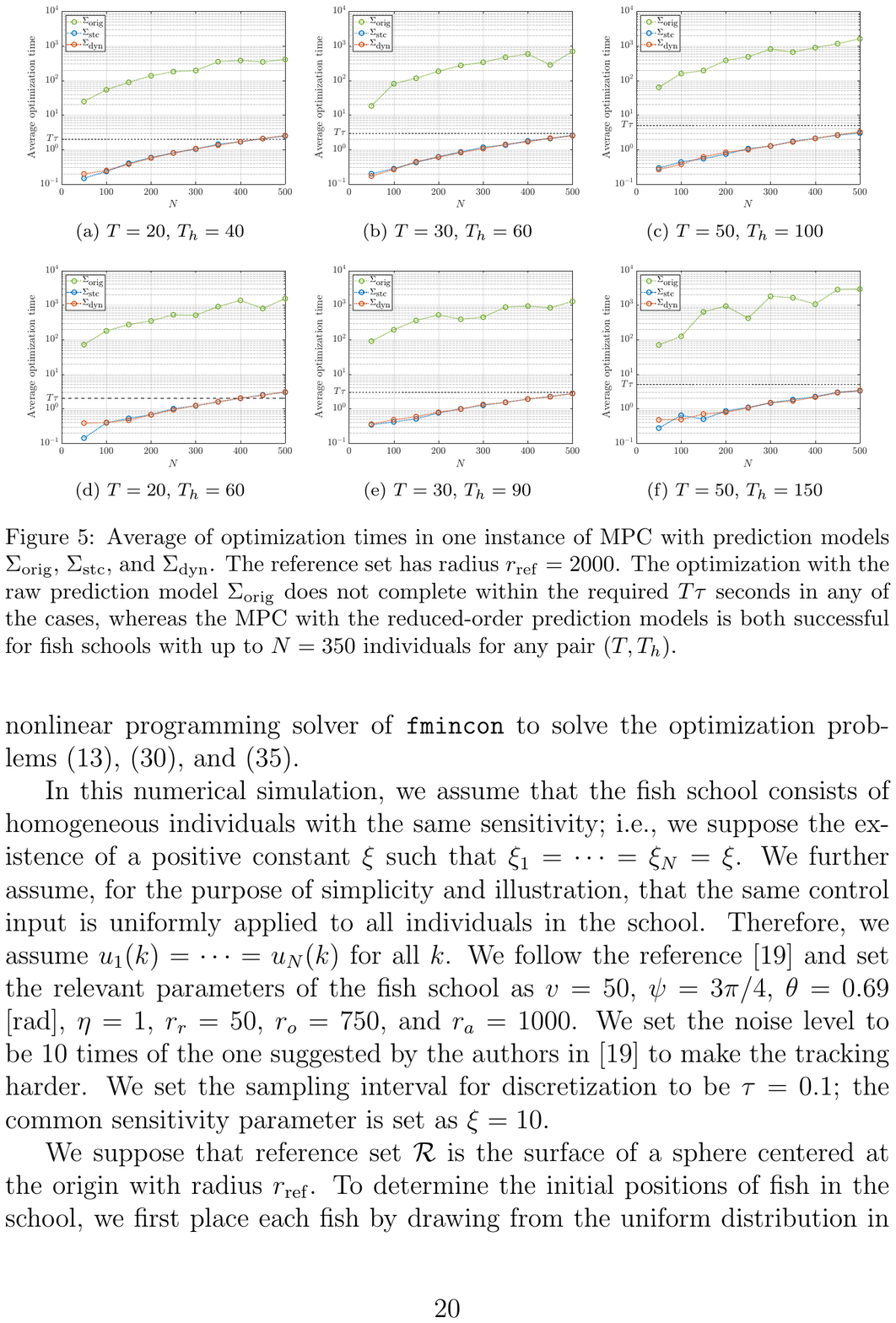}
\caption{Average of optimization times in one instance of MPC with prediction models $\Sigma_{\orig}$, $\Sigma_{\stc}$, and~$\Sigma_{\dyn}$. The reference set has radius $r_{{\reference}} = 2000$. \add{The optimization with the raw prediction model~$\Sigma_{\orig}$ does not complete within the required $T\tau$ seconds in any of the cases, whereas the MPC with the reduced-order prediction models is both successful for fish schools with up to $N=350$ individuals for any pair~$(T, T_h)$.}}
\label{fig:computationTime}
\end{figure*}

\section{Numerical Simulation}\label{sec:num}

In this section, we present numerical simulations and illustrate the effectiveness of the MPC with the reduced-order prediction models. \del{We also}\add{Further, we} numerically show that the dynamic-weight prediction model allows us to perform better tracking than \add{that with} the static-weight prediction model.  \del{The simulations}\add{Simulations} have been performed with MATLAB 2021a on a desktop computer with Xeon Gold 5218R processors and 96~GB DRAMs. \del{For solving the optimization problems \eqref{eq:clarifiedoptimizationproblem}, \eqref{eq:staticWeightMPCOptimization}, and \eqref{eq:dynamicWeightMPCOptimization}, we have}\add{We} used the nonlinear programming solver of \texttt{fmincon} \add{to solve the optimization problems~\eqref{eq:clarifiedoptimizationproblem}, \eqref{eq:staticWeightMPCOptimization}, and \eqref{eq:dynamicWeightMPCOptimization}}.

{In this numerical simulation, we assume that the fish school consists of homogeneous individuals with the same sensitivity; i.e., we suppose the existence of a positive constant $\xi$ such that $\xi_1 = \cdots = \xi_N =\xi$.} We \del{also}\add{further} assume, for the purpose of simplicity and illustration, that the same control input is uniformly applied to \del{the }all individuals in the school. Therefore, we assume $u_1(k) = \cdots = u_N(k)$ for all $k$. We follow the reference~\cite{Gautrais2008} and set the relevant parameters of the fish school as $v = 50$, $\psi = \del{3\pi/2}{3\pi/4}$, $\theta = \del{40(\pi/180)}\add{0.69}$ [rad], $\eta=1$, $r_r = 50$, $r_o = 750$, and~$r_a = \num[group-separator={,}]{1000}$. We set the noise level to be 10 times of the one suggested by the authors in~\cite{Gautrais2008} to make the tracking harder. We set the sampling interval for discretization to be $\tau = 0.1$\del{. The}\add{; the} common sensitivity parameter is set as $\xi = 10$. 

We suppose that \del{the }reference set~$\mathcal R$ is the surface of a sphere centered at the origin \del{and having}\add{with} radius $r_{{\reference}}$. To determine the initial positions of fish in the school, we first place each fish by drawing from the uniform distribution in a sphere with radius~$r_a/2$. Then, the entire set of fish is uniformly shifted so that the mass center $c(0)$ of the school is \del{exactly }located on a point on the reference set~$\mathcal R$. Our choice of the radius~$r_a/2$ guarantees that the initial fish school is \add{sufficiently} large \del{enough }so that both \del{the }orientation~\eqref{eq:def:O_i} and the attraction~\eqref{eq:def:A_i} forces are in effect, while ensuring that each pair of fish is within the interaction distance to avoid the fragmentation of the school.

\begin{figure*}[b]
\centering
\vspace{-6mm}
\includegraphics[width=.82\linewidth]{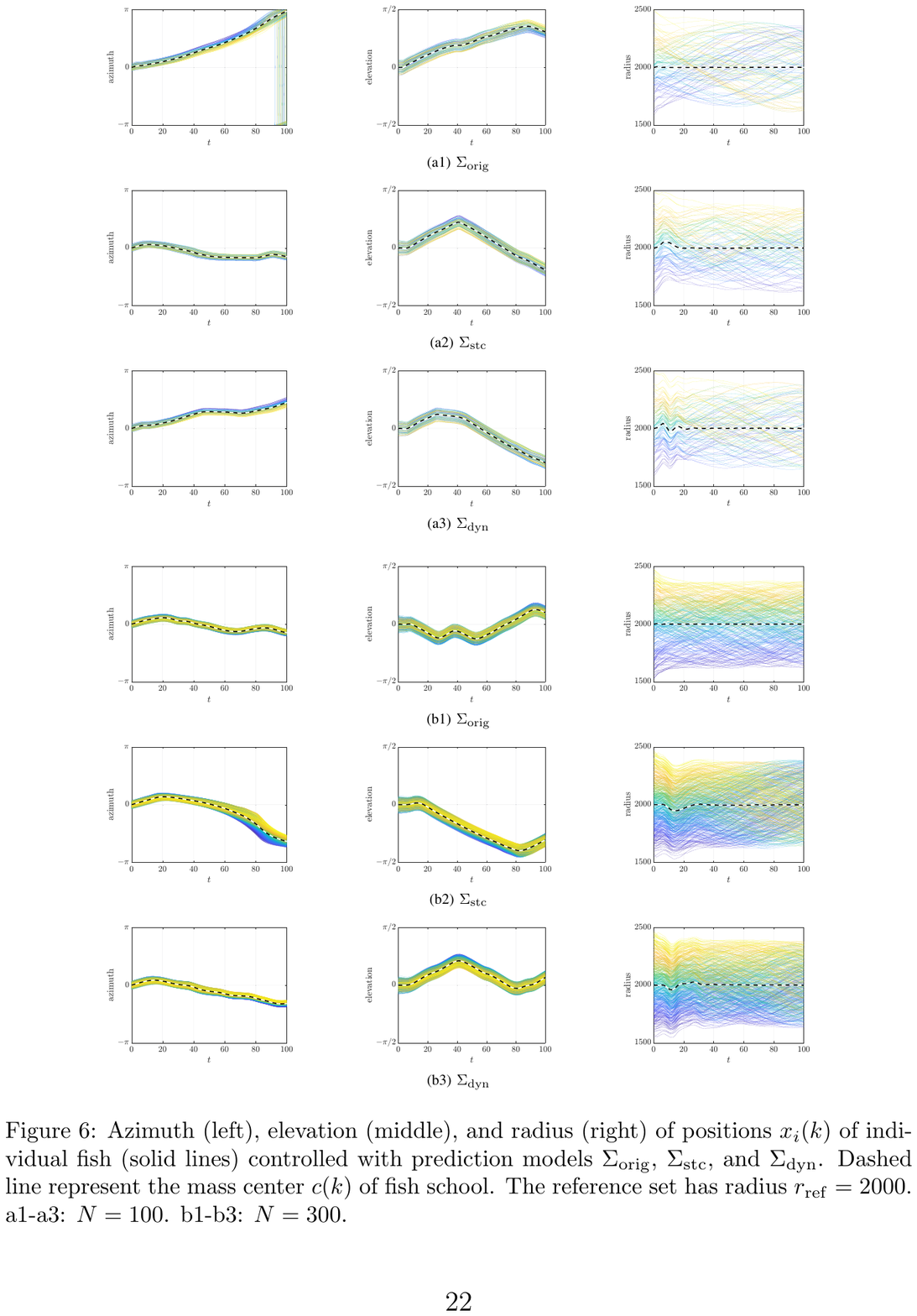}
\caption{Azimuth (left), elevation (middle), and radius (right) of positions $x_i(k)$ of individual fish (solid lines) {controlled with prediction models $\Sigma_{\orig}$, $\Sigma_{\stc}$, and~$\Sigma_{\dyn}$.} \del{and }{Dashed line represent} \add{the} mass center~$c(k)$ \del{(dashed line)}{of fish school}. \del{Reference}\add{The reference} set has radius $r_{{\reference}} = 2000$. {a1-a3: $N=100$. b1-b3: $N=300$.}}
\label{samplePaths300}
\end{figure*}

\begin{figure*}[tb]
\centering
\includegraphics{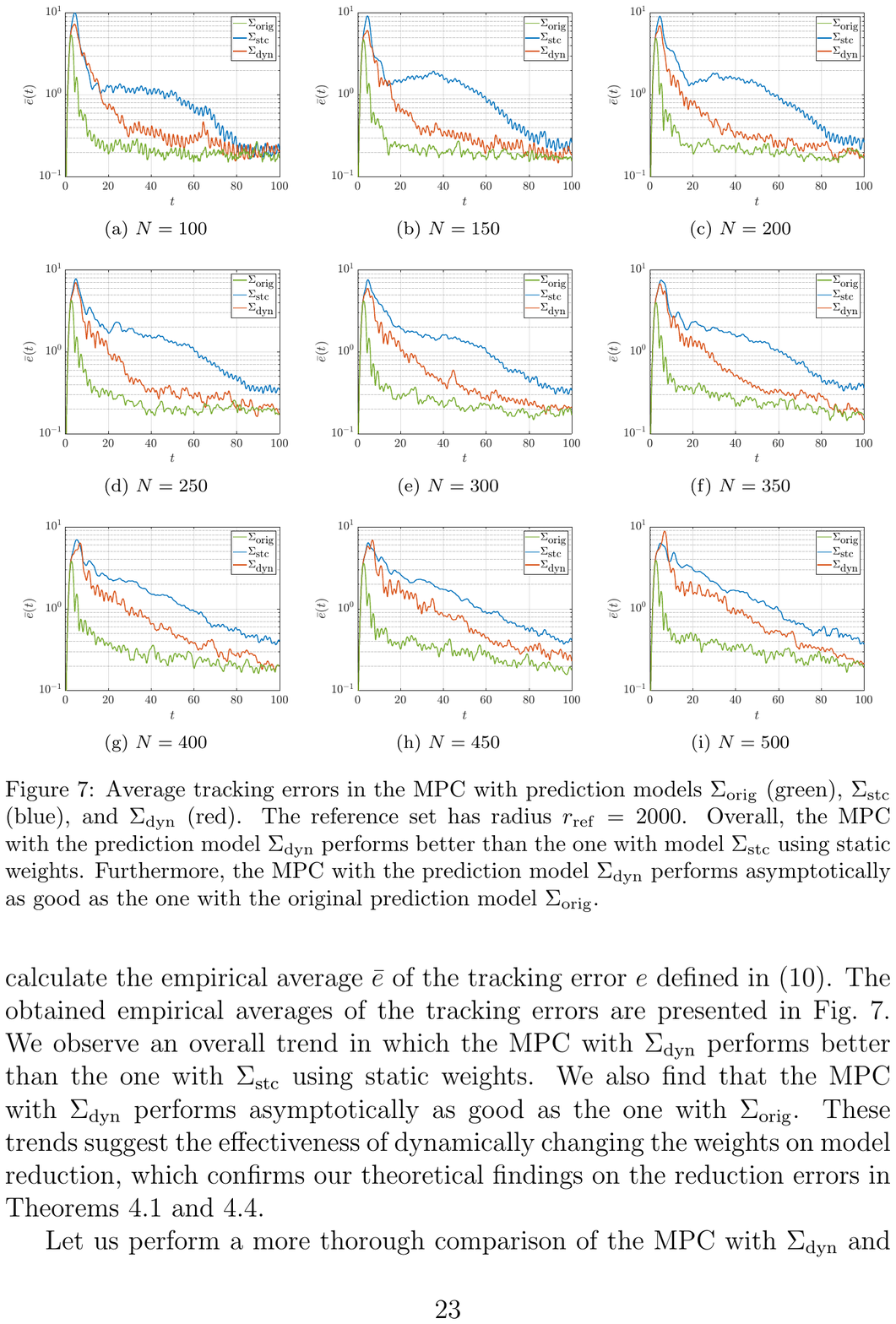}
\caption{Average tracking errors in the MPC with {prediction models $\Sigma_{\orig}$ (green),} \del{static-weight model }$\Sigma_{\stc}$ (blue){,} and \del{dynamic-weight model }$\Sigma_{\dyn}$ (red). \del{Reference}\add{The reference} set has radius $r_{{\reference}} = 2000$. \add{Overall, the MPC with the prediction model~$\Sigma_{\dyn}$ performs better than the one with \del{the }model~$\Sigma_{\stc}$ using static weights. Furthermore, the MPC with the prediction model~$\Sigma_{\dyn}$ performs asymptotically as good as the one with the original prediction model~$\Sigma_{\orig}$.}}
\label{averageerror}
\end{figure*}

\subsection{Computation Time}

We \del{first }investigate the amount of \del{the }reduction in computation time \del{for performing}\add{required to perform} the MPC. Let \del{us set }the radius of the reference sphere \del{to }be $r_{{\reference}}=2r_a= \num[group-separator={,}]{2000}$. We examine the following three different scenarios on \del{the }period~$T$ of observation and actuation: $T=20$, $30$, and~$50$. For each $T$, we \del{further }consider two different \del{length}\add{lengths} of the optimization horizon $T_h$: $T_h = 2T$ and~$T_h= 3T$. For \del{each of the total 6}\add{all six} cases, we perform the following simulation. We randomly set the initial positions and directions of schools of fish with~$N=50$, $100$, $150$, \dots, $500$ individuals. We then perform one instance of MPC \del{by }using the three prediction models $\Sigma_{\orig}$, $\Sigma_{\stc}$\add{,} and~$\Sigma_{\dyn}$ on the discrete-time interval $\{0, 1, \dotsc, 1000\}$ (i.e., the continuous-time interval~$[0, 100]$). \del{We then}\add{Then, we} measure the average of the computation time of finite-horizon optimizations.  

In Fig.~\ref{fig:computationTime}, we \del{show}\add{present} the average of the computation time of finite-horizon optimizations with each of the prediction models {(i.e., the optimization problems in~\eqref{eq:clarifiedoptimizationproblem}, \eqref{eq:staticWeightMPCOptimization}, and~\eqref{eq:dynamicWeightMPCOptimization})} and for the six pairs~$(T, T_h)$. \del{The optimization}\add{Optimization} with the raw prediction model~$\Sigma_{\orig}$ does not complete within the required $T\tau$ seconds in any \del{of the cases}\add{case}. \del{On the other hand, the}\add{The} MPC with the reduced-order prediction models is both successful for fish schools with up to $N=350$ individuals for any \del{of the cases}\add{case}, which confirms the computational effectiveness of \del{the }model reduction. 

\begin{figure*}[bt]
\centering
\includegraphics{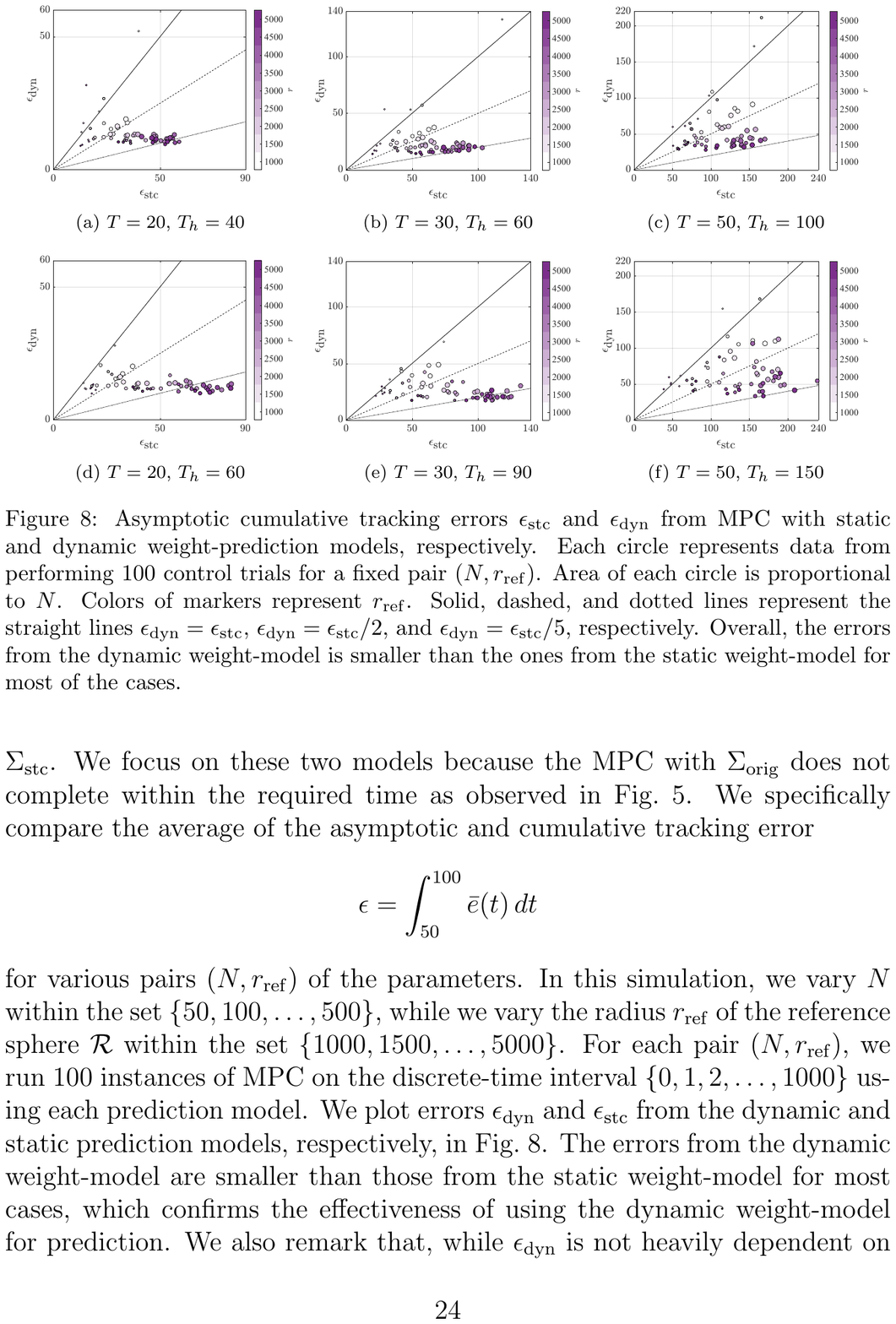}
\caption{Asymptotic cumulative tracking errors $\epsilon_{\stc}$ and~$\epsilon_{\dyn}$ from MPC with static and dynamic weight-prediction models, respectively. Each circle represents data from performing 100 control trials for a fixed pair $(N, r_{{\reference}})$. Area of each circle is proportional to $N$. Colors of markers represent $r_{{\reference}}$. Solid, dashed, and dotted lines represent the straight lines $\epsilon_{\dyn} = \epsilon_{\stc}$, $\epsilon_{\dyn} = \epsilon_{\stc}/2$, and~$\epsilon_{\dyn} = \epsilon_{\stc}/5$, respectively. \add{Overall, the errors from the dynamic weight-model is smaller than the ones from the static weight-model for most of the cases.}}
\label{animals}
\end{figure*}

\subsection{Tracking Performance}\label{subsec:tp}

\del{We then compare the tracking performances. In the remaining of the paper, we focus on the model predictive control with the reduced-order prediction models~$\Sigma_{\stc}$ and~$\Sigma_{\dyn}$ because the one with the raw prediction model~$\Sigma_{\orig}$ does not complete within the required time as we have observed in the previous subsection.} 

{We \del{then }compare the tracking performances.} \del{We first}\add{To this end, we} fix $r_{{\reference}}=2000$ and perform \add{the} MPC of the fish school consisting of~$N=100$ and~$300$ individuals. The \del{results of the simulations}\add{simulation results} are illustrated in Fig.~\ref{samplePaths300}. \del{In the figures, we}\add{We} show the azimuth, elevation, and radius of the position of all individuals and the mass center. For both \del{of the }school sizes, we \del{can }confirm that the MPC with {all} \del{both the }prediction models\del{$\Sigma_{\stc}$ and~$\Sigma_{\dyn}$} realize stable tracking to the sphere. 

\del{In order to further compare the tracking performance of the MPC with the {three} prediction models, we}\add{We} vary the school sizes and observe the average tracking errors \add{to further compare the tracking performance of the MPC with the {three} prediction models}. \del{Specifically, for each school size~$N$, we}\add{We} perform 100 instances of MPC over the same time interval with different initial positions and directions \add{for each school size~$N$}. \del{We then calculate}\add{Then, we calculate} the empirical average~$\bar e$ of the tracking error~$e$ defined in \eqref{eq:defe}. The obtained empirical averages of the tracking errors are presented in Fig.~\ref{averageerror}. We observe an overall trend in which the MPC with \del{the prediction model }$\Sigma_{\dyn}$ performs better than the one with \del{the model }$\Sigma_{\stc}$ using static weights. {We also find that the MPC with \del{the prediction model }$\Sigma_{\dyn}$ performs asymptotically as good as the one with \del{the original prediction model }$\Sigma_{\orig}$.} {These trends suggest} the effectiveness of dynamically changing the weights \del{in}\add{on} model reduction, which \del{thereby }confirms our theoretical findings on the reduction errors in Theorems~\ref{thm:static} and~\ref{thm:nec}.  

{Let us perform}\del{For} a more thorough comparison of the MPC with \del{the dynamic weight-prediction model }$\Sigma_{\dyn}$ and \del{the static counterpart }$\Sigma_{\stc}${.}\del{,} \del{The reason for focusing on}\add{We focus on} these two models \del{is that}\add{because} the MPC with \del{the raw prediction model }$\Sigma_{\orig}$ does not complete within the required time as \del{we have }observed in Fig.~\ref{fig:computationTime}. \del{we}{We specifically} compare the average of the asymptotic and cumulative tracking error
\begin{equation*}
    \epsilon = \int_{50}^{100} \bar e(t)\,dt    
\end{equation*}
for various pairs $(N, r_{{\reference}})$ of the parameters. In this simulation, 
we vary $N$ within the set $\{50, 100, \dotsc, 500\}$, while we vary the radius~$r_{{\reference}}$ of the reference sphere $\mathcal R$ within the set $\{1000, 1500, \dotsc, 5000\}$. For each pair $(N, r_{{\reference}})$, we run 100 instances of MPC on the discrete-time interval $\{0, 1, 2, \dotsc, 1000\}$ using each \del{of the }prediction model\del{s}. We plot \del{the }errors $\epsilon_{\dyn}$ and~$\epsilon_{\stc}$ from the dynamic and static prediction models, respectively, in Fig.~\ref{animals}. \del{We observe that the}\add{The} errors from the dynamic weight-model \del{is}\add{are} smaller than \del{the ones}\add{those} from the static weight-model for most \del{of the }cases, \del{confirming}\add{which confirms} the effectiveness of using the dynamic weight-model for prediction. We also remark that, while $\epsilon_{\dyn}$ is not heavily dependent on the optimization horizon~$T_h$, $\epsilon_{\stc}$ becomes larger for the longer optimization horizon~$T_h = 3T$. This phenomena \del{could}\add{can} be attributed to the relative inaccuracy of the static prediction model~$\Sigma_{\stc}$. \add{Further, we  observe the trend that the larger the radius~$r_{\reference}$ of the reference set~$\mathcal R$, the smaller is the relative accuracy~$\epsilon_{\dyn}/\epsilon_{\stc}$ of the MPC with dynamic weights. From this trend, we can infer that the MPC with the dynamic weights exhibit its superiority for tracking a reference set with a small curvature. Further,  for small schools, the MPC with dynamic weights does not necessarily outperform the one with static weights.}

\add{Therefore, we have confirmed the superiority of the prediction model~$\Sigma_{\dyn}$. The confirmation is limited to a case in which parameters of the schooling model coincide with the ones in \cite{Gautrais2008}. The parameters in \cite{Gautrais2008} have a certain degree of biologically plausibility, and therefore, the superiority confirmed so far is of practical significance. However, it is important to observe how robust the superiority is with respect to changes in the model parameters. Therefore, we perform the following simulations. Within the simulations, we fix $(T, T_h) = (30, 90)$ and repeat the same set of simulations as Fig.~\ref{animals} with different system parameters. We vary each  parameter~$\xi$, $\psi$, $\theta$, $r_o$, and~$r_a$ while others are fixed. In the simulations, we consider the following ten cases: (a) $\xi = 15$, (b) $\xi=5$, (c)  $\theta= 0.92$, (d) $\theta = 0.23$, (e) $\psi = 7\pi/8$, (f) $\psi = 5\pi/8$, (g) $r_o = 875$, (h) $r_o = 625$, (i) $r_a = 1125$, and (j) $r_a= 875$. The results of the simulations are presented in Fig.~\ref{animalsanimals}. Overall, MPC with $\Sigma_{\dyn}$ is more effective than the one with $\Sigma_{\stc}$.}

\begin{figure*}[tb]
\centering
\includegraphics{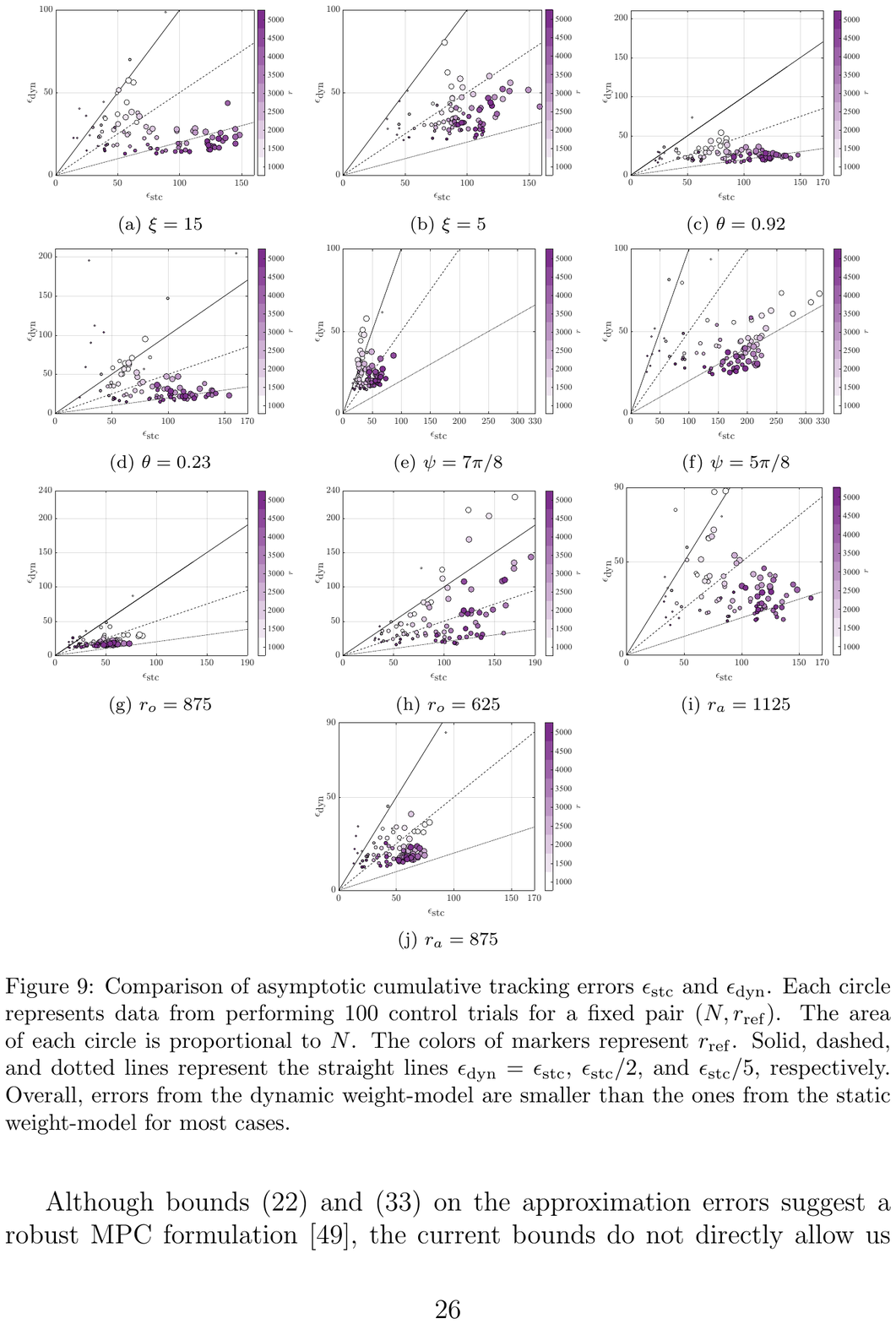}
\caption{\add{Comparison of asymptotic cumulative tracking errors $\epsilon_{\stc}$ and~$\epsilon_{\dyn}$. Each circle represents data from performing 100 control trials for a fixed pair $(N, r_{{\reference}})$. The area of each circle is proportional to $N$. The colors of markers represent $r_{{\reference}}$. Solid, dashed, and dotted lines represent the straight lines $\epsilon_{\dyn} = \epsilon_{\stc}$, $\epsilon_{\stc}/2$, and~$\epsilon_{\stc}/5$, respectively. Overall, errors from the dynamic weight-model are smaller than the ones from the static weight-model for most cases.}}
\label{animalsanimals}
\end{figure*}

\subsection{{Discussion}}\label{sec:discussion}

\add{We investigated the reduced-order MPC of the fish schooling model reported by Gautrais et al.~\cite{Gautrais2008}.  For the schooling model, we proposed reduced-order models based on the weighted average of the directions of individual fish. We then carefully analyzed the  reduction errors, which suggested that the normalized eigenvector centrality of the alignment-interaction network within the school lead to better MPC. Finally, this finding and numerical efficiency of the MPC based on reduced-order models are confirmed by extensive simulations, in which we considered various scenarios with different school sizes $N$ and the curvature of the reference set~$\mathcal R$. Therefore, this paper expands the applicability of reduced-order nonlinear MPC \cite{Wiese2015,Zhang2019c} to the fish schooling model. This paper also strengthens the existing connection \cite{Liu2012b,Fitch2016} between control and centrality measures.}

\del{We have numerically shown the superiority of the MPC with the model $\Sigma_{\dyn}$ based on the dynamic weights over the one $\Sigma_{\stc}$ based on the static weights, confirming the validity of the hypothesis drawn from our analyses in Theorems~\ref{thm:static} and~\ref{thm:nec}. This finding is supported by considering various situations with different school size $N$ and the curvature of the reference set~$\mathcal R$. Specifically, in Fig.~\ref{animals}, we can observe the trend that, the larger the radius~$r_{\reference}$ of the reference set~$\mathcal R$, the smaller the relative accuracy~$\epsilon_{\dyn}/\epsilon_{\stc}$ of the MPC with dynamic weights. From this trend, we can infer that the MPC with the dynamic weights will exhibit its superiority particularly for tracking a reference set with small curvature. We also find that, for small schools, the MPC with dynamic weights does not necessarily outperform the one with static weights.}

Although \del{the }bounds~\eqref{eq:error:alpha} and~\eqref{eq:alphastarineq} on the approximation errors suggest a robust MPC formulation~\cite{Mayne2016}, the current bounds do not directly allow us to employ such \add{a} formulation. \del{Because the error bounds \eqref{eq:error:alpha} and \eqref{eq:alphastarineq} are not closed in the variable $\langle V\rangle_\alpha$ (i.e., $\hat V$) within the prediction model, the}\add{The} bounds do not allow us to evaluate the approximation error within the whole prediction horizon if its length is longer than one \add{because error bounds \eqref{eq:error:alpha} and \eqref{eq:alphastarineq} are not closed in the variable $\langle V\rangle_\alpha$ (i.e., $\hat V$) within the prediction model}. \del{For the same reason}\add{Further}, the current error bounds do not enable us to formulate MPC with a guaranteed tracking performance. To incorporate the error bounds into the MPC, it is necessary to derive \del{an}other error bounds closed in the variable $\langle V\rangle_\alpha$.

\add{Toward future application of the proposed method for the control of actual fish \add{schools}, it is important to analyze the robustness of the method \add{against} external disturbances and model uncertainties that are common in practice. \add{For the} disturbances, random rotations incorporated into the current model can be \add{considered to represent} external disturbances \add{with} relatively short time constants. Furthermore, we confirmed the effectiveness of the proposed method for the model whose magnitude of random rotation is set as ten times large as the parameter adopted in the original model~\cite{Gautrais2008}. Therefore, the proposed method has a certain degree of robustness \add{against} external disturbances with small time constants. On the other hand, analyzing robustness of the proposed method to external disturbances with long time constants, such as unexpected steady flow in water, remains an open problem. Likewise, analysis of robustness to model uncertainties is not discussed in this paper. Performing such analysis and clarifying the proposed method's sensitivity with respect to parameter errors is important because the sensitivity indicates which parameter needs to be correctly identified to obtain better control performance.}

We finally remark the limited applicability of the proposed prediction models for small-scale fish schools\del{,} although the control of small-scale schools is not within the scope of the current research. \del{Specifically, we have}\add{We} numerically observed the violation of the connectivity condition~$n_i>0$ in~\eqref{eq:OIneq0} \del{during the course of}\add{when}\del{tracking control for} \add{controlling} small-scale schools, typically for those with less than 20 individuals. We found that this phenomenon occurs because of \del{the }poor connectivity within small-scale schools. We leave \del{as an open problem }the development of a reduced-order prediction model without requiring \del{the }condition~\eqref{eq:OIneq0} \add{as an open problem}.

\section{Conclusion}

\del{In this paper, we have}\add{We} studied the MPC of fish schools. We \del{have }presented a reduced-order prediction model based on the weighted sum of directions with the weight being \add{the} normalized eigenvector centrality of the orientation-based attraction network in the fish school. We \del{have }then numerically confirmed the effectiveness of the MPC with the reduced-order model. The proposed approach \del{has }allowed the control of the fish school with hundreds of individuals\del{,} and \del{has }achieved smaller tracking errors than the MPC with the reduce-order model based on uniform weights. \del{One future direction worth pursuing is to consider more realistic models of fish schooling such as the one in~\cite{Filella2018} that can account for water flows. Another direction is to theoretically examine the generalizability of the current reduction approach to other swarm models.}

\add{There exist several future research directions that should be pursued. A research direction of practical importance is realizing reduced-order MPC with a guaranteed tracking performance, as such MPC is difficult to perform with the proposed prediction model. Another practically important research direction is generalization for the MPC of higher-order variables because the proposed method is limited to the control of the center of mass. Such a generalization would allow us to achieve complex fish behaviors such as milling.}
\add{More realistic models of fish schooling such as the one in~\cite{Filella2018} that can account for water flows also need to be considered. From the theoretical side, it is important to examine the generalizability of the current reduction approach to other swarm models.}

\section*{Acknowledgment}

This work is supported by JSPS KAKENHI Grant Number\add{s} JP21H01352\add{, JP21H01353, JP22H00514 and JST Moonshot R\&D Grant Number JPMJMS2284}. 

\appendix

\del{\section{\del{Proof of Lemma~{\ref{old:lem:2}}}}\label{app:}
\setcounter{section}{0}}

\del{Let $x_0\in\mathbb R^3$ be a  unit vector. We first consider the special case of~$x = x_0$. 
The Taylor series expansion of~$\phi$ shows the existence 
of a constant $C_{1}$ and an open set $\mathcal N \subset \mathbb R^3$ containing the origin but not the point $-x_0$ such that $\norm{\delta\phi(x_0, y)} \leq C_{1} \norm{y}^2$ for all $y \in \mathcal N$.
Define the sets
$\mathcal R = \{y \mid \norm{y} > 1/2\}$
and 
$\mathcal T = \mathbb R^3 \backslash (\mathcal N \cup \mathcal R)$.
If $y \in \mathcal R$, then $1\leq  2\norm{y}$ and, therefore, the triangle inequality shows $\norm{\delta\phi (x_0, y)} \leq 2(1+ \norm y) \leq 12 \norm{y}^2$. Also, because the mapping $y \mapsto \phi(x_0+y)$ is continuous over the compact set~$\mathcal T$, the maximum
$C_{2} = \max_{y \in \mathcal T} {\norm{\delta \phi(x_0, y)}}/{\norm{y}^2}$
exists and is finite. Therefore, the inequality \eqref{eq:f(x,y)bound} holds true if $C > \max(C_{1}, C_{2}, 12)$.}
 
\del{Let us then consider a general $x$. Take a constant $c > 0$ and an orthogonal matrix~{$M$} such that $x = c{M}x_0$. Because a simple calculation shows $\delta \phi(x, y) = Q\,\,\delta\phi(x_0, c^{-1}{M}^\top y)$ and~$c = \norm{x}$, we obtain 
    $\norm{\delta \phi(x, y)} 
    = \norm{\delta \phi(x_0, c^{-1}{M}^\top y)}
    < C \norm{c^{-1} {M}^\top y}^2
    = ({C}/{\norm{x}^2})\norm{y}^2$,  
which completes the proof.}

\add
\section{Proof of Theorem~\ref{thm:static}}\label{sec:prthmstatic}

For the ease of presentation, we omit the time variable~$k$ if no confusion is expected. We start by proving the following technical lemma on the normalization operator~$\phi$ defined in~\eqref{eq:def:phi}.

\renewcommand{\thesection}{\Alph{section}}
\begin{lemma}\label{lem:2}
{For all $x, y \in \mathbb R^3$ such that $x\neq 0$ and~$x+y\neq 0$, define }
\begin{equation}\label{eq:deltaphi}
    {\delta\phi(x, y) = \phi(x+y) - \left(
    \phi(x) + \frac{1}{\norm{x}} y - \frac{y^\top x}{\norm{x}^{3}}x 
    \right).}
\end{equation}
{Then,}
\begin{equation}\label{eq:f(x,y)bound}
{\norm{\delta \phi(x, y)} \leq \frac{2}{\norm x^2}\norm{y}^2 }
\end{equation}
{for all $x\neq 0$ and~$y \in \mathbb R^3 \backslash \{-x\}$.} 
\end{lemma}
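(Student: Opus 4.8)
The plan is to reduce to the case $\norm{x}=1$ and then to an elementary scalar inequality. First I would record two invariances of the remainder $\delta\phi$ in~\eqref{eq:deltaphi}. Since $\phi$ is positively homogeneous of degree $0$ and the bracketed linear term in~\eqref{eq:deltaphi} is jointly homogeneous of degree $0$ in $(x,y)$, a direct substitution shows $\delta\phi(cx,cy)=\delta\phi(x,y)$ for every $c>0$; likewise, because $\phi(Mz)=M\phi(z)$ for orthogonal $M$ and the linear term is equivariant, $\delta\phi(Mx,My)=M\,\delta\phi(x,y)$. Writing $x=\norm{x}\,Me$ for a unit vector $e$ and a suitable rotation $M$, and setting $\tilde y=M^\top y/\norm{x}$, these two identities give $\norm{\delta\phi(x,y)}=\norm{\delta\phi(e,\tilde y)}$ with $\norm{\tilde y}=\norm{y}/\norm{x}$. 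Hence it suffices to prove $\norm{\delta\phi(e,\tilde y)}\le 2\norm{\tilde y}^2$ for a unit vector $e$, and the general bound~\eqref{eq:f(x,y)bound} then follows.

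With $\norm{x}=1$ I would pass to the geometry of the triangle formed by $x$ and $x+y$. Put $s=\norm{x+y}$ and $\theta=\angle(x,x+y)\in[0,\pi]$. Working in the two-dimensional plane spanned by $x$ and $x+y$ with orthonormal basis $\{x,w\}$, one has $\phi(x+y)=\cos\theta\,x+\sin\theta\,w$, whereas the linear approximation equals $x+(y-(y^\top x)x)=x+s\sin\theta\,w$, since the component of $y$ along $x$ is cancelled by construction. Subtracting gives $\delta\phi(x,y)=(\cos\theta-1)x+(1-s)\sin\theta\,w$, so that
\[
\norm{\delta\phi(x,y)}^2=(1-\cos\theta)^2+(1-s)^2\sin^2\theta,
\]
while the law of cosines applied to $y=(x+y)-x$ yields $\norm{y}^2=(1-s)^2+2s(1-\cos\theta)$.

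It then remains to show $(1-\cos\theta)^2+(1-s)^2\sin^2\theta\le 4\norm{y}^4$. Introducing $\alpha=1-\cos\theta\in[0,2]$ and $\beta=(1-s)^2\ge0$, and writing $D=\norm{y}^2=\beta+2s\alpha$, the left-hand side factors as $\alpha\bigl(\alpha+(2-\alpha)\beta\bigr)$, a product of two nonnegative factors. The key step is to bound each factor by $2D$: the estimates $\alpha\le 2D$ (using $\alpha\le 2$) and $\alpha+(2-\alpha)\beta\le 2D$ both reduce, after cancellation, to $1-4s\le(1-s)^2$, i.e.\ to $s(2+s)\ge0$, which holds for all $s>0$. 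Multiplying the two bounds gives $\norm{\delta\phi(x,y)}^2\le 4D^2$, and taking square roots (using $D\ge0$) yields $\norm{\delta\phi(x,y)}\le 2\norm{y}^2$. I expect the main obstacle to be precisely this scalar inequality: a brute-force expansion produces a two-variable polynomial whose nonnegativity is delicate near the antipodal configuration $y\to -x$, where the bound is in fact sharp and forces the constant to be exactly $2$. The factorization above is what circumvents a messy case analysis, since it splits the sharp estimate into two transparent linear inequalities.
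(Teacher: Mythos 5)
Your proof is correct. The reduction step is the same as the paper's: both use the scale invariance and orthogonal equivariance of $\delta\phi$ to reduce to a unit vector $x$, and both then parametrize by what amounts to the same pair of variables (your $s=\norm{x+y}$ and $\cos\theta$ are the paper's $R$ and $s$). Where you genuinely diverge is in how the resulting two-variable inequality $\norm{\delta\phi}^2\le 4\norm{y}^4$ is established. The paper writes the difference $\Delta = 4\norm{y}^4-\norm{\delta\phi}^2$ as a quadratic $a(R)s^2+b(R)s+c(R)$ in $\cos\theta$ and verifies nonnegativity by a case split at $R=1/4$, checking $a(R)>0$ and the discriminant condition $b(R)^2-4a(R)c(R)\le 0$ in one regime and a cruder triangle-inequality bound in the other; this involves quartic polynomial verifications in $R$. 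Your route instead exploits the exact factorization $\norm{\delta\phi}^2=\alpha\bigl(\alpha+(2-\alpha)\beta\bigr)$ with $\alpha=1-\cos\theta$, $\beta=(1-s)^2$, and bounds each nonnegative factor by $2D$ with $D=\norm{y}^2=\beta+2s\alpha$, everything collapsing to the single inequality $1-4s\le(1-s)^2$, i.e.\ $s(s+2)\ge 0$ (with the trivial observation that the first bound is immediate when $s\ge 1/4$). I have checked the factorization, the identity for $D$, and both factor bounds; they are all correct. Your argument is shorter, avoids the discriminant computation and the case analysis entirely, and makes it transparent why the constant $2$ suffices (it is tight only near the antipodal configuration, where $\alpha\to 2$). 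The paper's version, while more computational, is a purely mechanical verification that does not require spotting the factorization. Either proof is acceptable; yours is the cleaner of the two.
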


\begin{proof}
Let us first consider the the special case of~$x = e_1$, where $e_1 = [1\ 0\ 0]^\top$ denotes the first element of the canonical basis of~$\mathbb R^3$. Let $y\in \mathbb R^3 \backslash \{-x\}$ be arbitrary. Define $z = y+x$ and let $z_1$, $z_2$, and~$z_3$ denote the first, second, and third entries of vector~$z$, respectively. Because $\norm{x} = \norm{e_1}=1$, $y = z-e_1$, and~$\delta \phi(x, y) = \phi(z) - [1\ z_2\ z_3]^\top$, to prove the lemma, we show
\begin{equation}\label{eq:ineqtoshow}
    \norm{\phi(z) - [1\ z_2\ z_3]^\top
    }^2\leq 4\norm{z-e_1}^4
\end{equation}
for all nonzero $z$. 

Let us write $R = \norm{z}$ and~$r = z_1$. Then, 
an algebraic manipulation shows 
\begin{equation}\label{eq:normphiz}
    \norm{\phi(z) - [1\ z_2\ z_3]^\top}^2 = 
    2+R^2-r^2-2\frac{R^2-r^2+r}{R}
\end{equation}
and 
\begin{equation}\label{eq:normze1}
    \norm{z-e_1}^4
    =
    (R^2-2r+1)^2. 
\end{equation}
Because $r$ can take any value within the interval $[-R, R]$, we introduce an auxiliary variable $s\in [-1, 1]$ and write $r=sR$. Then, using equations~\eqref{eq:normphiz} and~\eqref{eq:normze1}, we can show that the difference~$\Delta =  4\norm{z-e_1}^4 - \norm{\phi(z) - [1\ z_2\ z_3]^\top
    }^2$ admits the representation 
\begin{equation*}
    \Delta = a(R)s^2 + b(R)s + c(R), 
\end{equation*}
where 
   $a(R) = 17R^2-2R$, 
   $b(R) = -16R^3 - 16R+2$, and
   $c(R) = 4 R^4 + 3R^2 + 2$.
Therefore, to prove \eqref{eq:ineqtoshow}, we need to show that $\Delta \geq 0$ holds for all $R\geq 0$ and~$s \in [-1, 1]$. 
Let us first consider the case~$R\leq 1/4$. In this case, because $c(R)\geq 0$, we have $\Delta \geq  -\abs{a(R)}
-\abs{b(R)}
+c(R)$. We can show that function $R\mapsto {-\abs{a(R)}
-\abs{b(R)}
+c(R)}$ is nonnegative on the interval $[0, 1/4]$, and therefore, we conclude that $\Delta \geq 0$ holds for all $s$. 
Next, let us  consider the case~$R\geq 1/4$. In this case, we can show that $a(R)>0$ and~$b(R)^2-4a(R)c(R)\leq 0$. This shows that $\Delta$, as a quadratic function of~$s$, is nonnegative. This observation implies $\Delta\geq 0$ for the case~$R\geq 1/4$ as well. Hence, we showed that \eqref{eq:f(x,y)bound} holds if $x=e_1$ for all $y\neq e_1$

Let us consider a general $x$. Take an orthogonal matrix~{$M$} such that $x = \norm{x}{M}e_1$. A simple calculation shows $\delta \phi(x, y) = M\,\delta\phi(e_1, \norm{x}^{-1}{M}^\top y)$. Therefore, we can show 
\begin{equation*}
\begin{aligned}
    \norm{\delta \phi(x, y)} 
&= \Norm{\delta \phi(e_1, \norm{x}^{-1}{M}^\top y)}
\\
&\leq 2 \Norm{\norm{x}^{-1} {M}^\top y}^2
\\
&= ({2}/{\norm{x}^2})\norm{y}^2, 
\end{aligned}
\end{equation*}
as desired.
\end{proof}

We also prove the following lemma. 
\begin{lemma}\label{lem:deltaOi}
For all $i\in \{1, \dotsc, N\}$, the vector 
\begin{equation*}
    \delta_\alpha O_i = O_i - n_i \langle V\rangle _\alpha
\end{equation*}
satisfies
\begin{equation}\label{eq:deltaOiineq}
    \norm{\delta_\alpha O_i} \leq n_i \sqrt{Q_{i, n_i \alpha}}. 
\end{equation}
\end{lemma}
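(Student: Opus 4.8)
The plan is to expand $\delta_\alpha O_i$ as a linear combination of the direction vectors whose coefficients sum to zero, and then to exploit this vanishing sum to rewrite the squared norm entirely in terms of the pairwise angles $\theta_{jj'}$.

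First I would use the definition~\eqref{eq:def:O_i} of $O_i$, which gives $O_i = \sum_{j=1}^N \omega_{ij} V_j$, together with $n_i\langle V\rangle_\alpha = \sum_{j=1}^N n_i \alpha_j V_j$, to write
\[
    \delta_\alpha O_i = \sum_{j=1}^N (\omega_{ij} - n_i \alpha_j) V_j.
\]
The crucial observation is that these coefficients sum to zero: since $\sum_{j=1}^N \omega_{ij} = n_i$ and $\sum_{j=1}^N \alpha_j = 1$ by~\eqref{eq:sumtoone}, we have $\sum_{j=1}^N (\omega_{ij} - n_i \alpha_j) = n_i - n_i = 0$.

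Next I would compute the squared norm. Introducing $\sigma_{jj'} = 1 - V_j^\top V_{j'}$ and using that each $V_j$ is a unit vector, so that $V_j^\top V_{j'} = 1 - \sigma_{jj'}$, the zero-sum identity annihilates the contribution of the constant term because $\bigl(\sum_{j=1}^N (\omega_{ij}-n_i\alpha_j)\bigr)^2 = 0$. This leaves
\[
    \norm{\delta_\alpha O_i}^2 = -\sum_{j=1}^N\sum_{j'=1}^N(\omega_{ij}-n_i\alpha_j)(\omega_{ij'}-n_i\alpha_{j'})\sigma_{jj'}.
\]
Finally I would bound the right-hand side. Since $V_j$ and $V_{j'}$ are unit vectors, $\sigma_{jj'} = 1 - \cos\theta_{jj'} \geq 0$, so the elementary estimate $-ab \leq \abs{a}\abs{b}$ yields
\[
    \norm{\delta_\alpha O_i}^2 \leq \sum_{j=1}^N\sum_{j'=1}^N\abs{\omega_{ij}-n_i\alpha_j}\,\abs{\omega_{ij'}-n_i\alpha_{j'}}\,\sigma_{jj'}.
\]
Applying the standard inequality $1 - \cos\theta \leq \theta^2/2$ to each $\sigma_{jj'}$ and matching the resulting expression against the definition of $Q_{i, n_i\alpha}$ identifies the sum as $n_i^2 Q_{i, n_i\alpha}$. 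Taking square roots then gives the desired bound~\eqref{eq:deltaOiineq}.

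The only step requiring care is the second one: recognizing that the zero-sum property of the coefficients is precisely what permits replacing the inner products $V_j^\top V_{j'}$ by the angular quantities $\sigma_{jj'}$ without leaving a spurious constant term. Everything afterward is routine use of $1-\cos\theta\leq\theta^2/2$ and the triangle inequality, so I anticipate no genuine obstacle.
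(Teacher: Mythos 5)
Your proposal is correct and follows essentially the same route as the paper's own proof: expanding $\delta_\alpha O_i = \sum_{j}(\omega_{ij}-n_i\alpha_j)V_j$, using the zero-sum of the coefficients to express $\norm{\delta_\alpha O_i}^2$ as $-\sum_{j}\sum_{j'}(\omega_{ij}-n_i\alpha_j)(\omega_{ij'}-n_i\alpha_{j'})\sigma_{jj'}$, and then bounding via $\sigma_{jj'}\leq \theta_{jj'}^2/2$ to recover $n_i^2 Q_{i,n_i\alpha}$. No gaps.
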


\begin{proof}
For all $i, j \in \{1, \dotsc, N\}$, define $\sigma_{ij} = 1 - V_i^\top V_j$. Definition \eqref{eq:def:O_i} of the vector~$O_i$ shows $\delta_\alpha O_i = \sum_{j=1}^N(\omega_{ij}-n_i\alpha_j)V_j$. Therefore, 
\begin{equation}\label{eq:thisequation}
    \norm{\delta_\alpha O_i}^2 
=
    -\sum_{j=1}^N \sum_{j'=1}^N 
     (\omega_{ij}-n_i\alpha_j) (\omega_{ij'}-n_i\alpha_{j'}) \sigma_{jj'}, 
\end{equation}
where we use equality $\sum_{j=1}^N (\omega_{ij} - n_i \alpha_j) = 0$. Therefore, equation~\eqref{eq:thisequation} and the trivial inequality $\sigma_{ij} \leq \theta_{ij}^2/2$ prove
$\norm{\delta_\alpha O_i}^2 
\leq 
\sum_{j=1}^N \sum_{j'=1}^N \abs{\omega_{ij'}-n_i\alpha_{j'}}\,
\abs{\omega_{ij}-n_i\alpha_j} \theta_{jj'}^2/2
=n_i^2 Q_{i, n_i\alpha}$, 
as desired.   
\end{proof}

Using Lemmas~\ref{lem:2} and~\ref{lem:deltaOi}, we prove the following preliminary error-bound for proving Theorem~\ref{thm:static}. 

\begin{proposition}\label{prop:zeta}
Let $i \in \{1, \dotsc, N\}$. {Assume that \eqref{eq:OIneq0} holds.} Define
\begin{equation}\label{eq:tildeD}
    w_i = \eta A_i+\xi_i u_i. 
\end{equation}
{If $\langle V \rangle_\alpha \neq 0$, then}\del{Then,} the norm of the vector \add{$\zeta_i$ defined by}
\begin{equation}\label{eq:zetai}
\zeta_i
= 
\phi(D_i) 
- 
\left(
\langle V\rangle_\alpha
+ 
\frac{w_i}{n_i}
-
\frac{w_i^\top \langle V\rangle_\alpha}{n_i\norm{\langle V\rangle_\alpha}^2}\langle V\rangle_\alpha
\right)
\end{equation}
satisfies 
\begin{equation*}
\begin{multlined}[.85\linewidth]
\norm{\zeta_i}
\leq 2\tilde \pi_\alpha \rho_i + 2 \rho_i^2 + \tilde \pi_i(1+4\rho_i) + (1 + 2\rho_i)\sqrt{Q_{i, n_i \alpha}}. 
\end{multlined}
\end{equation*}
\end{proposition}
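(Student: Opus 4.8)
The plan is to expand $\phi(D_i)$ around $O_i$ via the second-order formula of Lemma~\ref{lem:2} and then replace every occurrence of $O_i$ and $\norm{O_i}$ by $n_i\langle V\rangle_\alpha$ and $n_i$, respectively, carefully tracking the resulting discrepancies. Since $\mathcal N_{r,i}=\emptyset$ by~\eqref{eq:empty}, the desired direction is $D_i = O_i + w_i$ with $w_i$ as in~\eqref{eq:tildeD}, so the definition~\eqref{eq:deltaphi} gives $\phi(D_i) = \phi(O_i) + \norm{O_i}^{-1}w_i - \norm{O_i}^{-3}(w_i^\top O_i)O_i + \delta\phi(O_i, w_i)$. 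Subtracting the vector in parentheses in~\eqref{eq:zetai} and regrouping, I would write $\zeta_i = \zeta_{i,1} + \cdots + \zeta_{i,6}$, where $\zeta_{i,1}=\delta\phi(O_i,w_i)$ is the curvature remainder, $\zeta_{i,2}=\phi(O_i)-\langle V\rangle_\alpha$ is the direction mismatch, $\zeta_{i,3}$ and $\zeta_{i,4}$ collect the scale mismatches $\norm{O_i}^{-1}-n_i^{-1}$ and $\norm{O_i}^{-3}-n_i^{-3}$ in the linear and rank-one terms, and $\zeta_{i,5},\zeta_{i,6}$ absorb the difference between projecting $w_i$ onto $\langle V\rangle_\alpha$ versus onto $O_i$.

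I would then bound each piece. For $\zeta_{i,1}$, Lemma~\ref{lem:2} yields $\norm{\zeta_{i,1}}\leq 2\norm{O_i}^{-2}\norm{w_i}^2 = 2\rho_i^2$. For $\zeta_{i,2}$, I would use Lemma~\ref{lem:deltaOi} to write $\langle V\rangle_\alpha = n_i^{-1}(O_i - \delta_\alpha O_i)$, split off the radial part $(\norm{O_i}^{-1}-n_i^{-1})O_i$ whose norm is exactly $\tilde\pi_i$ (using $\norm{O_i}=\pi_i n_i$), and bound the residual $n_i^{-1}\delta_\alpha O_i$ by $\sqrt{Q_{i,n_i\alpha}}$, giving $\norm{\zeta_{i,2}}\leq \tilde\pi_i+\sqrt{Q_{i,n_i\alpha}}$. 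The scale-mismatch terms reduce, after Cauchy--Schwarz, to $\norm{\zeta_{i,3}}\leq(\norm{O_i}^{-1}-n_i^{-1})\norm{w_i}=\tilde\pi_i\rho_i$ and $\norm{\zeta_{i,4}}\leq\rho_i(1-\pi_i^3)\leq 3\tilde\pi_i\rho_i$, the last step invoking $1-\pi_i^3\leq 3(1-\pi_i)$ for $\pi_i\in[0,1]$. For $\zeta_{i,5}$, I would factor out the scalar $(\norm{\langle V\rangle_\alpha}^{-2}-1)$ and combine $1-\norm{\langle V\rangle_\alpha}^2\leq 2\tilde\pi_\alpha$ with $\norm{w_i}/n_i\leq\rho_i$ to obtain $\norm{\zeta_{i,5}}\leq 2\tilde\pi_\alpha\rho_i$.

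The main obstacle is $\zeta_{i,6}$, whose naive expansion appears to generate a third-order cross term. The device I would use is to substitute $O_i=n_i\langle V\rangle_\alpha+\delta_\alpha O_i$ only inside $\norm{O_i}^{-3}(w_i^\top O_i)O_i$ so that the leading term $\tfrac{w_i^\top\langle V\rangle_\alpha}{n_i}\langle V\rangle_\alpha$ cancels exactly, leaving $\zeta_{i,6} = -n_i^{-2}(w_i^\top\delta_\alpha O_i)\langle V\rangle_\alpha - n_i^{-3}(w_i^\top O_i)\delta_\alpha O_i$. Both summands are genuinely first order in $\delta_\alpha O_i$, so bounding them by Cauchy--Schwarz together with $\norm{\langle V\rangle_\alpha}\leq 1$, $\norm{O_i}\leq n_i$, and Lemma~\ref{lem:deltaOi} gives $\norm{\zeta_{i,6}}\leq 2\norm{w_i}n_i^{-1}\sqrt{Q_{i,n_i\alpha}}\leq 2\rho_i\sqrt{Q_{i,n_i\alpha}}$.

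Finally, summing the six bounds by the triangle inequality and collecting like terms in $\tilde\pi_\alpha\rho_i$, $\rho_i^2$, $\tilde\pi_i$, and $\sqrt{Q_{i,n_i\alpha}}$ reproduces exactly the claimed estimate. The recurring inequalities $\norm{O_i}=\pi_i n_i\leq n_i$ and $\norm{\langle V\rangle_\alpha}\leq 1$ (the latter because $\langle V\rangle_\alpha$ is a convex combination of unit vectors when the $\alpha_i$ sum to one) are what keep every constant sharp throughout the argument.
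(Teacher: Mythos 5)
Your proof is correct and follows essentially the same route as the paper's: the identical six-term decomposition $\zeta_i=\zeta_{i,1}+\cdots+\zeta_{i,6}$ (curvature remainder, direction mismatch, two scale mismatches, and the two projection-difference terms), with the same individual bounds assembled by the triangle inequality. Your explicit expansion of $\zeta_{i,6}$ via $O_i=n_i\langle V\rangle_\alpha+\delta_\alpha O_i$ fills in a step the paper leaves implicit, and your value $2\rho_i^2$ for $\norm{\zeta_{i,1}}$ is the one consistent with the stated bound (the intermediate ``$=4\rho_i^2$'' in the paper's display is a typo).
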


\begin{proof}
\add{Let us first show that}
the vector~$\zeta_i$ \add{can be decomposed }as 
\begin{equation}\label{eq:zetaidecomp}
    \add{\zeta_i = \zeta_{i, 1} + \zeta_{i, 2} + \zeta_{i, 3} + \zeta_{i, 4} + \zeta_{i, 5}+ \zeta_{i, 6},}
\end{equation}
where vectors $\zeta_{i, 1}$, \dots, $\zeta_{i, 6}$ are defined by 
\begin{equation*}
    \begin{aligned}
    \zeta_{i, 1} &= \delta\phi(O_i, w_i), 
    \\
    \zeta_{i, 2} &= \phi(O_i) - \langle V \rangle_\alpha, 
    \\
    \zeta_{i, 3} &= ({\norm{O_i}^{-1}-n_i^{-1}}) w_i, 
    \\
    \zeta_{i, 4} &= - (\norm{O_i}^{-3}-n_i^{-3})(w_i^\top O_i) O_i, 
    \\
    \zeta_{i, 5} &= \frac{w_i^\top \langle V\rangle_\alpha}{n_i\norm{\langle V\rangle_\alpha}^2}\langle V\rangle_\alpha
    - 
    \frac{w_i^\top \langle V\rangle_\alpha}{n_i}\langle V\rangle_\alpha, 
    \\
    \zeta_{i, 6} &= \frac{w_i^\top \langle V\rangle_\alpha}{n_i}\langle V\rangle_\alpha
    - 
    \frac{w_i^\top O_i}{n_i^3}O_i. 
    \end{aligned}
\end{equation*}
\label{page:eq:zetaidecomp}By definition~\eqref{eq:deltaphi} of the mapping~$\delta\phi$ and identity~$D_i = O_i + w_i$, we can show $\delta\phi(O_i, w_i) + \phi(O_i)+\norm{O_i}^{-1}w_i - \norm{O_i}^{-3}(w_i^\top O_i)O_i = \phi(O_i + w_i) = \phi(D_i)$. Therefore, a simple algebraic manipulation yields
\begin{equation*}
    \zeta_{i, 1} + \zeta_{i, 2} + \zeta_{i, 3} + \zeta_{i, 4} = \phi(D_i) - \langle V\rangle_\alpha - n^{-1}w_i + n_i^{-3}(w_i^\top O_i) O_i. 
\end{equation*}
We can also easily see that
\begin{equation*}
\zeta_{i, 5} + \zeta_{i, 6} =
\frac{w_i^\top \langle V\rangle_\alpha}{n_i \norm{\langle V\rangle_\alpha}^2}\langle V\rangle_\alpha - \frac{w_i^\top O_i}{n_i^3}O_i. 
\end{equation*}
Hence, from \eqref{eq:zetai}, we can confirm the decomposition~\eqref{eq:zetaidecomp}, as required.

Next, let us evaluate the norm of each of these vectors. Lemma~\ref{lem:2} shows
\begin{equation}\label{eq:zeta1}
\begin{aligned}
    \norm{\zeta_{i, 1}} 
    &\leq 2 \norm{O_i}^{-2} \norm{w_i}^2
    =4 \rho_i^2. 
    \end{aligned}
\end{equation}
Because $\zeta_{i, 2} = (\norm{O_i}^{-1}-n_i^{-1})O_i - n_i^{-1}\delta_\alpha O_i$, inequality~\eqref{eq:deltaOiineq} shows
\begin{equation}\label{eq:isreferred_zeta2}
    \begin{aligned}
    \norm{\zeta_{i, 2}} 
    &\leq 
    (\norm{O_i}^{-1}-n_i^{-1})\norm{O_i} + \sqrt{Q_{i, n_i \alpha}}
    \\
    &=
    \tilde \pi_i + \sqrt{Q_{i, n_i \alpha}}. 
    \end{aligned}
\end{equation}
Similarly, we can show 
\begin{equation}\label{eq:zeta3}
\begin{aligned}
    \norm{\zeta_{i, 3}} 
    &\leq ({\norm{O_i}^{-1}-n_i^{-1}})\norm{w_i}
    = \tilde \pi_i \rho_i, 
    \end{aligned}
\end{equation}
and
\begin{equation}\label{eq:isreferred_zeta4}
\begin{aligned}
    \norm{\zeta_{i, 4}} 
    &\leq
    (\norm{O_i}^{-3}-n_i^{-3})\norm{w_i} \norm{O_i}^2
    \\
    &=
    \rho_i (1-\pi_i^3)
    \\
    &\leq 
    3 \tilde \pi_i \rho_i. 
\end{aligned}
\end{equation}
Furthermore, it follows that  
\begin{equation}\label{eq:isreferred_zeta5}
\begin{aligned}
    \norm{\zeta_{i, 5}}
    &\leq
    (1-\norm{\langle V\rangle_\alpha}^2)\norm{w_i}n_i^{-1}
    \\
    &\leq 2\tilde \pi_\alpha \rho_i. 
\end{aligned}
\end{equation}
By using the triangle inequality and inequality~\eqref{eq:deltaOiineq}, we can show 
\begin{equation}\label{eq:zeta6}
\begin{aligned}
    \norm{\zeta_{i, 6}} 
    &\leq  2\norm{w_i}n_i^{-1} \sqrt{Q_{i, n_i \alpha}}
    \\
    &\leq 
    2\rho_i \sqrt{Q_{i, n_i \alpha}}. 
\end{aligned}
\end{equation}
Finally, inequalities \cref{eq:zeta1,eq:isreferred_zeta2,eq:zeta3,eq:isreferred_zeta4,eq:isreferred_zeta5,eq:zeta6} complete the proof.
\end{proof}

We can now prove Theorem~\ref{thm:static}. Let  $w = A+u$. From definition \eqref{eq:tildeD} of vector~$w_i$, we have $\sum_{i=1}^N ({\alpha_i}/{n_i}) w_i = w$. Therefore, taking the weighted summation with respect to $i$ in equation~\eqref{eq:zetai} shows 
\begin{equation*}
\begin{aligned} 
   \langle \zeta\rangle_\alpha
    &=
    \begin{multlined}[t]
        \langle \phi(D)\rangle_\alpha 
    -
    \left(
    \langle V\rangle_\alpha + w - \frac{w^\top\langle V\rangle_\alpha}{\norm{\langle V\rangle_\alpha}^2}\langle V\rangle_\alpha
    \right)
    \end{multlined}
    \\
    &=
    \begin{multlined}[t]
    \langle \phi(D)\rangle_\alpha -
    \norm{\langle V\rangle_\alpha}\,\phi\bigl(\langle V\rangle_\alpha + w\bigr) + \norm{\langle V\rangle_\alpha}\, \delta\phi\bigl(\langle V\rangle_\alpha, w\bigr), 
    \end{multlined}
\end{aligned}
\end{equation*}
where in the last equation we used \eqref{eq:deltaphi}. Therefore,
\begin{equation}\label{eq:finalepailonalpha}
    \epsilon_\alpha = \langle \zeta\rangle_\alpha-
    \norm{\langle V\rangle_\alpha}\,\delta\phi\bigl(\langle V\rangle_\alpha, w\bigr). 
\end{equation}
Hence, the triangle inequality as well as 
Proposition~\ref{prop:zeta} and Lemma~\ref{lem:2} complete the proof of inequality~\eqref{eq:error:alpha}. 

\renewcommand{\thesection}{Appendix \Alph{section}}
\section{Proof of Theorem~\ref{thm:nec}}\label{sec:pfthmnec2}

Define vector~$\zeta_i$ as in \eqref{eq:zetai}. Assume that $\alpha$ equals the normalized eigenvector centrality of~$\mathcal G_o(k)$. Let us decompose vector~$\langle\zeta \rangle_\alpha = \sum_{i=1}^N \alpha_i \zeta_i$ as  $\langle\zeta \rangle_\alpha = \omega_{1} + \omega_{2} + \omega_{3} + \omega_{4} + \omega_{5} + \omega_6$, where $\omega_\ell = \sum_{i=1}^N \alpha_i \zeta_{i, \ell}$ for all $\ell=1,\dotsc, 6$. We bound the norm of  vectors $\omega_1$, $\omega_3$, $\omega_4$, $\omega_5$, and~$\omega_6$ using inequalities \eqref{eq:zeta1} and \eqref{eq:zeta3}--\eqref{eq:zeta6}. Let us evaluate $\norm{\omega_2}$ in a different manner. Because $\alpha$ is assumed to be equal to the normalized eigenvector centrality of~$\mathcal G_o(k)$, from equation \eqref{eq:intuition} we can show 
\begin{equation*}
\omega_2 
= 
\sum_{i=1}^N \alpha_i(\phi(O_i) - \langle V \rangle_\alpha) 
=  \sum_{i=1}^N \alpha_i (\phi(O_i) - n_i^{-1} O_i).
\end{equation*}
Thus, we obtain $\norm{\omega_2} \leq \sum_{i=1}^N \alpha_i \tilde \pi_i$. This inequality as well as inequalities~\eqref{eq:zeta1} and~\eqref{eq:zeta3}--\eqref{eq:zeta6} show
\begin{equation*}
\begin{multlined}[.85\linewidth]
\norm{\langle \zeta\rangle_\alpha}
 \leq \sum_{i=1}^N \alpha_i\bigg(2\tilde \pi_\alpha \rho_i + 2 \rho_i^2 
 + \tilde \pi_i(1+4\rho_i) + 2\rho_i\sqrt{Q_{i, n_i \alpha}}\bigg). 
 \end{multlined}
\end{equation*}
Therefore, equality~\eqref{eq:finalepailonalpha} as well as 
Proposition~\ref{prop:zeta} and Lemma~\ref{lem:2} complete the proof of the theorem, as desired. 

\color{black}






\end{document}